\def\RSthmtxt{theorem~}\newref{thm}{name = \RSthmtxt}}
\def\RSlemtxt{lemma~}\newref{lem}{name = \RSlemtxt}}
\theoremstyle{plain}
\newtheorem{thm}{\protect\theoremname}
\theoremstyle{definition}
\newtheorem{defn}[thm]{\protect\definitionname}
\theoremstyle{plain}
\newtheorem{lem}[thm]{\protect\lemmaname}
\theoremstyle{plain}
\newtheorem{fact}[thm]{\protect\factname}
\theoremstyle{remark}
\newtheorem{claim}[thm]{\protect\claimname}
\theoremstyle{remark}
\newtheorem{rem}[thm]{\protect\remarkname}
\theoremstyle{plain}
\newtheorem{cor}[thm]{\protect\corollaryname}
\theoremstyle{plain}
\newtheorem{conjecture}[thm]{\protect\conjecturename}
\providecommand{\claimname}{Claim}
\providecommand{\conjecturename}{Conjecture}
\providecommand{\corollaryname}{Corollary}
\providecommand{\definitionname}{Definition}
\providecommand{\factname}{Fact}
\providecommand{\lemmaname}{Lemma}
\providecommand{\remarkname}{Remark}
\providecommand{\theoremname}{Theorem}
\begin{document}

\newcommand{\tr}{\operatorname{tr}}

\global\long\def\poly{\mathrm{{poly}}}
\global\long\def\polylog{\mathrm{{polylog}}}%
\global\long\def\zo{\mathrm{\{0,1\}}}%

\global\long\def\mo{\mathrm{{-1,1}}}%

\global\long\def\e{\mathrm{\epsilon}}%

\global\long\def\eps{\mathrm{\epsilon}}%

\global\long\def\E{\mathrm{\mathbb{E}}}%

\global\long\def\F{\mathrm{\mathbb{F}}}%

\global\long\def\P{\mathrm{\mathbb{P}}}%

\title{Boosting uniformity in quasirandom groups: fast and simple}
\author{%
Harm Derksen\thanks{Partially supported by NSF grant DMS 2147769.} \\
Northeastern University\\
ha.derksen@northeastern.edu
\and
Chin Ho Lee\\
North Carolina State University\\
chinho.lee@ncsu.edu	
\and
Emanuele Viola\thanks{Supported by NSF grant CCF-2114116.} \\
Northeastern University\\
viola@ccs.neu.edu
}
\maketitle

\begin{abstract}
We study the communication complexity of multiplying $k\times t$
elements from the group $H=\text{SL}(2,q)$ in the number-on-forehead
model with $k$ parties. We prove a lower bound of $(t\log H)/c^{k}$.
This is an exponential improvement over previous work, and matches
the state-of-the-art in the area.

Relatedly, we show that the convolution of $k^{c}$ independent copies
of a 3-uniform distribution over $H^{m}$ is close to a $k$-uniform
distribution. This is again an exponential improvement over previous
work which needed $c^{k}$ copies.

The proofs are remarkably simple; the results extend to other quasirandom
groups.

We also show that for any group $H$, any distribution over $H^{m}$
whose weight-$k$ Fourier coefficients are small is close to a $k$-uniform
distribution. This generalizes previous work in the abelian setting,
and the proof is simpler.
\end{abstract}

\section{Introduction and our results}

Iterated multiplication of elements in a group is a fundamental problem
that has a long history and wide-ranging applications, and is linked
to long-standing open problems. Already in \cite{DBLP:journals/jacm/LiptonZ77}
it has been pivotal to provide space-efficient algorithms for Dyck
languages. Depending on the underlying group, iterated multiplication
is complete for various complexity classes \cite{KrohnMR66,DBLP:journals/siamcomp/McKenzieC87,Barrington89,Ben-OrC92,ImmermanL95,Miles14}.
For example, Barrington's famous result \cite{Barrington89}
shows that it is complete for $\text{NC}^{1}$ if and only if the underlying
group is non-solvable. This in particular disproved a conjecture about
the complexity of majority \cite{BorodinDFP83}. This type of results
has then been taken further in the study of \emph{catalytic computation}
\cite{BuhrmanCKLS14}, leading to other surprising discoveries \cite{BuhrmanCKLS14,DBLP:conf/stoc/CookM20}.

The focus of this paper is on number-on-forehead communication complexity
\cite{CFL83}. For a survey on the communication complexity of group
products, see \cite{viola-SIGACT19}, and see \cite{KuN97,RaoY2019}
for general background on communication complexity. Concretely, the
input is a matrix of $k\times t$ elements $a_{ij}$ from a group
$H$, and the goal is computing $\prod_{j=1}^{t}a_{1j}\cdots a_{kj}$.
There are $k$ collaborating parties, with Party $i$ knowing all
the input except row $i$.

This problem is also linked to central open problems in communication
complexity. Specifically, \cite{GowersV-cc-int-journal} conjectured
that over certain groups this problem remains hard even for $k$ larger
than $\log n$. Establishing such bounds is arguably the most significant
open problem in the area. A number of candidates have been put forward
over the years, but many have been ruled out via ingenious protocols,
e.g.~in \cite{PRS97,BGKL03,BrodyC08,DBLP:journals/cc/AdaCFN15}.
Interestingly, for the iterated-product candidate proposed in \cite{GowersV-cc-int-journal},
no non-trivial protocol is known.

Iterated group products are also candidate for providing strong separations
between randomized and deterministic number-on-forehead communication.
The current bounds (see \cite{viola-SIGACT19}) give a separation
matching one in \cite{BDPW07}. Stronger bounds could simplify and
strengthen the recent exciting separation \cite{DBLP:journals/corr/abs-2308-12451}.

Returning to the problem, we note that its complexity heavily depends
on the underlying group. If it is abelian, then the problem can be
solved with constant communication, using the public-coin protocol
for equality. Over certain other groups a communication lower bound
of $t/c^{k}$ follows via \cite{Barrington89} from the landmark lower
bound in \cite{BNS92} for generalized inner product; cf.~\cite{MilesV-leak}.
However, this bound does not improve with the size of the group. In
particular it is far from the (trivial) upper bound of $t\log H$,
and it gives nothing when $t$ is constant. Motivated by a cryptographic
application, \cite{MilesV-leak} asked whether a lower bound that
grows with the size of the group, ideally $ct\log H$, can be established
over some group $H$.

Gowers and Viola \cite{GowersV-cc-int,GowersV-cc-int-journal} proved
a bound of $(t\log H)/c^{c^{k}}$ for the group $\text{SL}(2,q)$
of $2\times2$ invertible matrices over $\mathbb{F}_{q}$, which enables
the motivating application from \cite{MilesV-leak}. Subsequent work
\cite{DV-group-mix} simplified the proof and generalized it to any
quasi-random group \cite{Gowers08}, see also \cite{GowersV-cc-int-journal,Shalev16}.
While such bounds do grow with the size of the group, thus answering
the question in \cite{MilesV-leak} and enabling the motivating application
in cryptography, the dependency on the number $k$ of parties is weak:
One can only afford $k$ doubly-logarithmic in the input length.

In this work we give an exponential improvement and obtain bounds
of the form $(t\log H)/c^{k}$, thus matching the state-of-the-art
in number-on-forehead communication \cite{BNS92}. As in \cite{GowersV-cc-int-journal},
we prove stronger results that even bound the advantage such
protocols have when the input is promised to multiply to one of two
fixed elements.
\begin{thm}
\label{thm:cc-bound}Let $H=\mathrm{SL}(2,q)$. Let $P\colon H^{k\times t}\to[2]$
be a number-on-forehead communication protocol with $k$ parties and
communication $b$ bits. For $g\in H$ denote by $p_{g}$ the probability
that $P$ outputs $1$ over a uniform input $(a_{i,j})_{i\le k,j\le t}$
such that $\prod_{j=1}^{t}a_{1j}\cdots a_{kj}=g$. For any $k$ and
any two $g,h\in H$, if $t\ge c^{k}$ then $|p_{g}-p_{h}|\le2^{b}\cdot H^{-t/c^{k}}$.
\end{thm}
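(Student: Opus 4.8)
The plan is to run the standard number-on-forehead discrepancy argument and feed it the quasirandomness of $H$ through the Fourier/uniformity machinery announced above. Since the product of even a single column is uniform on $H$ and independent of the other columns, $\Pr[\prod_j a_{1j}\cdots a_{kj}=g]=1/|H|$ exactly, so with $F(a):=\mathbf 1[\mathrm{prod}(a)=g]-\mathbf 1[\mathrm{prod}(a)=h]$ one has $p_g-p_h=|H|\cdot\E_a[P(a)F(a)]$. Decomposing $P$ into its $\le 2^b$ transcript cylinder intersections gives $|p_g-p_h|\le |H|\,2^b\,\mathrm{disc}(F)$, where $\mathrm{disc}(F)=\max_C|\E_a[\mathbf 1_C(a)F(a)]|$ over cylinder intersections with one slab per row, and the Babai--Nisan--Szegedy cube inequality bounds $\mathrm{disc}(F)^{2^k}\le \Phi$ for $\Phi:=\E\big[\prod_{\varepsilon\in\{0,1\}^k}F(r_1^{\varepsilon_1},\dots,r_k^{\varepsilon_k})\big]$, the $2k$ rows $r_i^0,r_i^1\in H^t$ being independent and uniform. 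So it suffices to show $\mathrm{disc}(F)\le H^{-1-t/c^k}$, equivalently $\Phi\le H^{-2^k(1+t/c^k)}$.

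Next I would identify $\Phi$ with an average against one explicit distribution. Writing $\pi_\varepsilon:=\mathrm{prod}(r_1^{\varepsilon_1},\dots,r_k^{\varepsilon_k})$ for the $2^k$ corner products and $\mathcal D$ for the law on $H^{2^k}$ of the corner products of a single column, $\big(a_1^{\varepsilon_1}\cdots a_k^{\varepsilon_k}\big)_{\varepsilon}$ with the $a_i^b\in H$ independent uniform, the facts that distinct columns are independent and that the $t$-column corner product is the coordinatewise product of the single-column ones show that the law of $(\pi_\varepsilon)_\varepsilon$ is exactly the $t$-fold convolution $\mathcal D^{*t}$ in the group $H^{2^k}$. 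A short computation shows $\mathcal D$ is $3$-uniform: among any three distinct $\varepsilon$'s some coordinate is non-constant, so some $a_i^b$ occurs in exactly one of the three words and leaves it uniform and independent of the other two. Since $|F|\le1$ and $\E_{\pi\sim U_H}F(\pi)=0$, if $\mathcal D^{*t}$ is $\delta$-close in total variation to the uniform law $U_{H^{2^k}}$ then (the uniform case contributing $0$ by independence) $|\Phi|\le\delta$, and everything reduces to bounding $\delta=\|\mathcal D^{*t}-U_{H^{2^k}}\|_{\mathrm{TV}}$.

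This last bound is exactly what the announced results on $3$-uniform convolutions give. Fourier-analytically on $H^{2^k}$: $3$-uniformity of $\mathcal D$ kills every Fourier coefficient of weight $\le 3$; for a nontrivial $\rho=\bigotimes_\varepsilon\rho_\varepsilon$ with $\widehat{\mathcal D}(\rho)\ne0$ one checks its support has no coordinate-value occurring in a single word (else that word's $a_i^b$ averages to $0$), whence Schur's lemma turns the averages over the $a_i^b$ into rank-one projections onto maximally-entangled vectors that fail to align across slots, giving $\|\widehat{\mathcal D}(\rho)\|_{\mathrm{op}}\le 1/D$ (an explicit computation when the support is an affine plane, e.g.\ for $k=2$). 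As convolution multiplies Fourier coefficients, $\|\widehat{\mathcal D^{*t}}(\rho)\|_{\mathrm{op}}\le D^{-t}$ for all nontrivial $\rho$, and Parseval on $H^{2^k}$ yields $\delta\le\tfrac12|H|^{2^{k-1}}D^{-t}$. Since $\mathrm{SL}(2,q)$ is $D$-quasirandom with $D\ge(q-1)/2\ge|H|^{1/3}/2$ and $t\ge c^k\gg 2^k$, taking $c$ a large enough absolute constant makes $\tfrac12|H|^{2^{k-1}}D^{-t}\le H^{-2^k(1+t/c^k)}$ (the $|H|^{O(2^k)}$ losses are swamped by $D^{-t}$ because $t$ is exponentially larger than $k$); chaining back through $\mathrm{disc}(F)\le\Phi^{1/2^k}$ and $|p_g-p_h|\le|H|\,2^b\,\mathrm{disc}(F)$ finishes the proof.

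The real obstacle is the operator-norm estimate $\|\widehat{\mathcal D}(\rho)\|_{\mathrm{op}}\le D^{-\Omega(1)}$ for \emph{every} nontrivial $\rho$ with $\widehat{\mathcal D}(\rho)\ne0$: a contraction by only a constant, or by $1-1/\mathrm{poly}(D)$, would be useless, since we need $\log|H|$ to enter the final exponent, and that forces per-convolution shrinkage polynomial in $1/D$. For weight-$4$ supports (affine planes) this is a clean Schur-lemma computation reducing to $k=2$; for the higher-weight ``robust'' supports one still expects shrinkage $\le 1/D$ from the misaligned projections, but proving it uniformly --- together with the companion general-group principle that small weight-$\ell$ Fourier coefficients imply closeness to $\ell$-uniform, here over the exponentially large group $H^{2^k}$ --- is the technical heart. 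It is precisely the move of processing all $t$ columns in one convolution estimate, rather than recursing on the $k$ parties and squaring the error $k$ times, that turns the previous $c^{c^k}$ into $c^k$.
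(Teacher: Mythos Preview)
Your reduction matches the paper's Section~4: the box-norm/BNS step reduces $|p_g-p_h|$ to the distance between $\mathcal D^{*t}$ and the uniform law on $H^{2^k}$, where your $\mathcal D$ is exactly the distribution $s$ of Definition~\ref{def:box-norm-dist}, and your proof that $\mathcal D$ is $3$-uniform is the paper's claim there. The gap is the step you yourself label ``the real obstacle'': the uniform operator-norm bound $\|\widehat{\mathcal D}(\rho)\|_{\mathrm{op}}\le 1/D$ for every nontrivial $\rho$ with $\widehat{\mathcal D}(\rho)\ne 0$. You sketch it only for weight~$4$, and even that case is essentially the interleaved-mixing estimate of \cite{GowersV-cc-int-journal,DV-group-mix}, not a one-line Schur computation. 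For general $k$ the surviving supports are arbitrary subsets of $\{0,1\}^k$ in which every $a_i^b$ occurs at least twice, and you provide no argument covering them. Without that bound the chain $\|\widehat{\mathcal D^{*t}}(\rho)\|_{\mathrm{op}}\le D^{-t}\Rightarrow\delta\le|H|^{2^{k-1}}D^{-t}$ never gets started, and the proof is incomplete at precisely the step carrying all the quantitative content.

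The paper does not prove such a bound; it avoids needing one. The flattening step (Lemma~\ref{lem:basic-convolution}) uses nothing about the distribution beyond $k$-uniformity: weight-$\le k$ coefficients vanish by assumption, and for weight $>k$ one invokes only Parseval together with the dimension bound $d_\rho\ge d^{k+1}$ from quasirandomness to get $G^2|\widehat p(\rho)|_2^2\le G|p|_2^2/d^{k+1}$. This already shrinks $|p*p-u|_2^2$ by a factor $\approx H^{m-k}d^{-(k+1)}$ per convolution, so a constant number of convolutions promotes $(H^{-k},k)$-uniform to $(H^{-m},m)$-uniform for $m=\lceil(1+c)k\rceil$ (Theorem~\ref{thm:boost-k-2k}); iterating $O(k)$ times gives Theorem~\ref{thm:boost-3-to-m}, and Lemma~\ref{lem:good-boost} handles the remaining convolutions. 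The per-coefficient structural estimate on $\widehat{\mathcal D}$ that blocks your argument is thus exactly what the paper shows to be unnecessary: a generic $k$-uniform flattening, iterated, replaces it.
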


The high-level proof technique is the same as in \cite{GowersV-cc-int-journal}.
They reduced the problem to \emph{boosting uniformity }over $m$ copies
of $H$.
\begin{defn}
A distribution $p$ over a set $S$ is $\e$-uniform if $|p(x)-1/S|\le\e/S$.
If $S=H^{t}$ and $k\le t$ we say $p$ is $(\e,k)$-uniform if for
any $k$ coordinates, the induced distribution over those coordinates
is $\e$-uniform. We say $p$ is $k$-uniform if it is $(0,k)$-uniform.
\end{defn}

\cite{GowersV-cc-int-journal} showed that if $s$ is a $2$-uniform
distribution over $H^{m}$ then the convolution (a.k.a. component-wise
product) of some $\ell$ independent copies of $s$ is $H^{-m}$-uniform
over the whole space $H^{m}$. Note that such a result is false for
abelian groups -- the convolution can remain only $2$-uniform. Quantitatively,
they show that $\ell=c^{m}$ copies suffice. In the application to
\thmref{cc-bound} one has $m=2^{k}$, which gives the doubly logarithmic
dependence on $k$.

In this work we give a corresponding exponential improvement on the
number of copies required to boost uniformity: we show that in fact
$\ell=m^{c}$ copies suffice. Our proof is remarkably simple, especially
if we start with $3$-uniform distributions, which we note suffices
for \thmref{cc-bound}. (We discuss below extensions to $2$-uniform
and other groups.) We state this result next.
\begin{thm}
\label{thm:boost-3-to-m}Let $H=\mathrm{SL}(2,q)$. Let $p$ be a $3$-uniform
distribution over $H^{m}$. The convolution of $m^{c}$ independent
copies of $p$ is $H^{-m}$-uniform.
\end{thm}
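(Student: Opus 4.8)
The plan is to run a Fourier argument over the nonabelian group $G=H^{m}$. The irreducible representations of $G$ are the tensor products $\rho=\rho_{1}\otimes\cdots\otimes\rho_{m}$ of irreducibles of $H$; convolution of distributions corresponds to the matrix product of the Fourier coefficients $\hat{p}(\rho):=\E_{x\sim p}[\rho(x)]$; and the inversion formula yields $|p(x)-|G|^{-1}|\le|G|^{-1}\sum_{\rho\neq1}(\dim\rho)^{2}\|\hat{p}(\rho)\|_{\mathrm{op}}$. Call the \emph{weight} of $\rho$ the number of $i$ with $\rho_{i}$ nontrivial; $p$ is $k$-uniform iff $\hat{p}(\rho)=0$ for every $\rho$ of weight $1\le w\le k$. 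Since $\widehat{p^{*\ell}}(\rho)=\hat{p}(\rho)^{\ell}$, it suffices to show, for $\ell=m^{c}$, that $\sum_{\rho\neq1}(\dim\rho)^{2}\|\hat{p}(\rho)\|_{\mathrm{op}}^{\ell}\le|H|^{-2m}$. As $p$ is $3$-uniform, only representations of weight $w\ge4$ contribute.

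The crux is the following estimate, which I will call the Key Lemma: there is a constant $\gamma=\gamma(H)<1$ such that $\|\hat{p}(\rho)\|_{\mathrm{op}}\le\gamma$ for every $3$-uniform $p$ over $H^{m}$ and every $\rho$ of weight at least $4$. Granting it, the weight-$w$ part of the sum is at most $|H|^{O(w)}\gamma^{m^{c}}$, since there are at most $|H|^{O(w)}$ representations of weight $w$ and each has $(\dim\rho)^{2}\le|H|^{O(w)}$; summing over $w$ gives $2^{m}|H|^{O(m)}\gamma^{m^{c}}$, which is below $|H|^{-2m}$ once $c$ is a large enough constant (which may depend on $H$). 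Convolution is essential here: over abelian groups the analogue of the Key Lemma is false --- weight-$4$ Fourier coefficients can have operator norm $1$ --- which is exactly why boosting uniformity fails there.

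To prove the Key Lemma I would first reduce to full weight: restricting $p$ and $\rho$ to the set $S$ of coordinates where $\rho$ is nontrivial, $p|_{S}$ is still $3$-uniform on $H^{|S|}$ and $\hat{p}(\rho)=\widehat{p|_{S}}(\rho)$, so it is enough to bound $\|\hat{q}(\rho)\|_{\mathrm{op}}$ for $q$ a $3$-uniform distribution on $H^{w}$, $w\ge4$, and $\rho=\rho_{1}\otimes\cdots\otimes\rho_{w}$ with all $\rho_{i}$ nontrivial. Now analyze the extremal case $\|\hat{q}(\rho)\|_{\mathrm{op}}=1$: picking a unit vector $v$ achieving the norm, $\hat{q}(\rho)v=\E_{x\sim q}[\rho(x)v]$ is an average of the unit vectors $\rho(x)v$ whose average still has norm $1$, so strict convexity of the unit ball forces $\rho(x)v$ to be one and the same vector for every $x$ in the support of $q$; hence $\mathrm{supp}(q)$ lies in a single coset of the subgroup $K=\{g\in H^{w}:\rho(g)v=v\}\le H^{w}$. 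But $q$ is $2$-uniform, so its support --- and therefore $K$ --- must surject onto $H^{2}$ on every pair of coordinates. Since $H=\mathrm{SL}(2,q)$ is perfect and its only non-central normal subgroup is $H$ itself (both standard for $\mathrm{SL}(2,q)$, and the right analogues hold for quasirandom groups), iterated Goursat's lemma forces any subgroup of $H^{w}$ that surjects onto every pair to equal $H^{w}$; but $K=H^{w}$ means $v$ is fixed by the nontrivial irreducible $\rho$, a contradiction. Hence $\|\hat{q}(\rho)\|_{\mathrm{op}}<1$ for all $w$, $q$, $\rho$.

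The main obstacle is upgrading this strict inequality to the dimension-free bound $\|\hat{q}(\rho)\|_{\mathrm{op}}\le\gamma(H)$: since $w$ is unbounded, compactness alone only gives a bound depending on $w$, whereas the theorem needs $\gamma$ bounded away from $1$ by a quantity independent of $w$ (any fixed $\delta(H)>0$, even $1/\poly(|H|)$, would do). I would handle this with a stability version of the previous paragraph. If $\|\hat{q}(\rho)\|_{\mathrm{op}}\ge1-\delta$ then $\E_{x\sim q}\|\rho(x)v-u\|^{2}\le2\delta$ for $u=\hat{q}(\rho)v$, so $q$ places all but a $O(\sqrt{\delta})$ fraction of its mass on $\{x:\|\rho(x)v-u\|\le\delta^{1/4}\}$, a set whose difference set nearly stabilizes $v$; pushing an approximate-subgroup version through the pair-projection/Goursat dichotomy should show that $q$ is within $O(\sqrt{\delta})$ of failing $2$-uniformity on some pair, with implied constants depending only on $H$. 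Since $q$ is \emph{exactly} $2$-uniform, choosing $\delta$ below a threshold depending only on $H$ gives a contradiction, and hence the desired $\gamma(H)$. For the cleaner $3$-uniform statement one could alternatively settle the bounded-$w$ cases by invoking the $c^{w}$-copies bound of \cite{GowersV-cc-int-journal}, which is harmless for constant $w$, and treat large $w$ by the structural dichotomy; but some quantitative estimate along these lines seems unavoidable in general.
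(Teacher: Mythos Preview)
Your approach has a quantitative gap that prevents it from proving the theorem as stated. Even granting the Key Lemma with some $\gamma(H)<1$, to make $2^{m}|H|^{O(m)}\gamma^{\ell}\le|H|^{-2m}$ you need $\ell\ge\Omega(m\log|H|/\log(1/\gamma))$. For this to equal $m^{c}$ with a \emph{universal} $c$ (the paper's convention is that every ``$c$'' is an absolute constant, in particular independent of $q$) you would need $\log(1/\gamma)\ge\Omega(\log|H|)$, i.e.\ $\gamma\le|H|^{-\Omega(1)}$. Your Goursat/stability argument has no hope of delivering that: at best it gives \emph{some} $\gamma(H)<1$ with no control on how close to $1$ it is as $q\to\infty$. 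You explicitly write that $c$ ``may depend on $H$,'' but the theorem must hold for, say, $m=5$ and arbitrarily large $q$, and this uniformity in $q$ is exactly what feeds into the communication lower bound of \thmref{cc-bound}. Separately, the Key Lemma itself is not established: compactness gives $\|\hat q(\rho)\|_{\mathrm{op}}<1$ for each fixed $w$, but the uniform-in-$w$ bound is where all the content lies, and the ``approximate-subgroup version'' of Goursat you gesture at is not a proof.

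The paper's argument is different in kind: it never bounds the operator norm of a single $\widehat{p}(\rho)$ away from $1$. Instead it works in $L_{2}$ and uses quasirandomness \emph{quantitatively}. For any $\rho$ with $|\rho|>k$ one has $d_{\rho}\ge d^{k+1}\ge|H|^{c(k+1)}$, and Parseval gives $|\widehat{p}(\rho)|_{2}^{2}\le d_{\rho}^{-1}\E_{x}p(x)^{2}$; combined with the $k$-uniformity killing low-weight coefficients, one convolution shrinks $|p-u|_{2}^{2}$ by a factor $\approx H^{m-k}d^{-(k+1)}$. Taking $m\approx(1+c)k$ makes this a fixed negative power of $|H|$, so a \emph{constant} number of convolutions upgrades $(H^{-k},k)$-uniform to $(H^{-(1+c)k},(1+c)k)$-uniform; iterating $O(\log m)$ times yields $m^{O(1)}$ copies in total, with all constants independent of $q$. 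The missing idea in your proposal is precisely this: the gain per convolution must scale like $|H|^{-ck}$, not like a fixed $\gamma$, and that comes from the dimension lower bound $d_{\rho}\ge d^{|\rho|}$ rather than from any operator-norm estimate.
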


Our approach allows us to double the uniformity, i.e., go from $k$-uniform
to $2k$-uniform using only a \emph{constant }number of convolutions,
independently of $k$, whereas \cite{GowersV-cc-int-journal} would
use $\ge k$ convolutions. This points to a key difference in the
techniques. In \cite{GowersV-cc-int-journal}, boosting uniformity
is achieved by reduction to \emph{interleaved products, }and appears
tailored to going from $2$-uniform to $3$-uniform. Our approach
is different, and simpler, even taking into account the simple proof
of interleaved mixing from \cite{DV-group-mix}. It can be seen as
a $k$-uniform version of the flattening lemmas discovered in \cite{Gowers08,BabaiNP08,GowersV-cc-int-journal}.
In a nutshell, the $k$-uniformity assumption allows us to remove
all ``low-degree'' Fourier coefficients, leaving only those of degree
$>k$. Then the quasi-randomness assumption, combined with the tensor-product
structure of the Fourier coefficients allows us to ``flatten'' distributions
at a rate proportional to $H^{-ck}$, instead of $H^{-c}$ as in previous
work. We note that while using $k$-uniformity to remove low-degree
coefficients is a common proof technique, we are not aware of previous
work where this is done in the non-abelian setting. This might indicate
that our techniques might find other applications, and in general
we advocate a systematic study of non-abelian analogues of the Fourier
toolkit. Another step in this direction is discussed next.

\paragraph{$(\protect\e,k)$-uniformity vs.~$k$-uniformity.}

Extending the classic work \cite{AlonGM03}, Rubinfeld and Xie~\cite{DBLP:journals/rsa/RubinfeldX13}
showed that every almost $k$-uniform distribution over any Abelian
product group is statistically close to some $k$-uniform distribution.
We generalize their result to any product group. Our approach is significantly
simpler. \cite{DBLP:journals/rsa/RubinfeldX13} decomposes the given
$k$-uniform distribution in a real orthogonal basis instead of the
Fourier basis; we show that in fact the same argument can be carried
out directly over the Fourier basis. A critical observation is that
removing Fourier coefficients of a fixed weight from a real function
keeps the function real.

This generalization, in combination with \thmref{boost-3-to-m}, gives
a refinement of \thmref{boost-3-to-m} where the number of copies
is $k^{c}$ and the final distribution is statistically close to a
$k$-uniform distribution (whereas a direct application of \thmref{boost-3-to-m}
would just give an $(H^{-k^{c}},k)$-uniform distribution).

\paragraph{Extensions.}

\thmref{cc-bound} and \thmref{boost-3-to-m} above can be generalized
to any quasi-random group and to distributions which are $2$-uniform.
This can be done by first using the results in \cite{GowersV-cc-int-journal,DV-group-mix}
to boost $2$-uniformity to $v$-uniformity for a sufficiently large
constant $v$ depending on the quasirandomness of the group (for $\text{SL}(2,q)$,
$v=3$ suffices). This requires a number of convolutions that is exponential
in $v$, but since $v$ is constant it can be afforded. After that,
our results kick in and allow to boost faster.

\section{Preliminaries}

In this section we fix some notation, especially about representation
theory.

For a set $X$, we also write $X$ for its size $|X|$. We write $[i]$
for the set $\{0,1,\ldots,i-1\}$. Every occurrence of ``$c$''
denotes a possibly different universal constant. Replacing ``$c$''
with $O(1)$ everywhere is consistent with a common interpretation
of the latter. For a function $f\colon G\to\mathbb{C}$ we denote by $|f|_{2}^{2}$
the un-normalized quantity $\sum_{x\in G}|f(x)|^{2}$.

Next we present the standard framework of representation theory. The
books by Serre \cite{Serre77}, Diaconis \cite{MR964069}, and Terras
\cite{MR1695775} are good references for representation theory and
non-abelian Fourier analysis. The Barbados notes \cite{Wigderson-Barbados10}
or Section~13 of \cite{MR3584096} or \cite{GowersV-group-mix} provide
briefer introductions. The exposition in these sources is not always
consistent, and often has different aims from ours. So let us give
a quick account of the theory that is most relevant for this work.

\paragraph{Matrices.}

Let $M$ be a square complex matrix. We denote by $\tr(M)$
the trace of $M$, by $\overline{M}$ the conjugate of $M$, by $M^{T}$
the transpose of $M$, and by $M^{*}$ the conjugate transpose $\overline{M^{T}}$
(aka adjoint, Hermitian conjugate, etc.). The matrix $M$ is \emph{unitary}
if the rows and the columns are orthonormal; equivalently $M^{-1}=M^{*}$.

We denote by

\[
|M|_{2}^{2}:=\sum_{i,j}|M_{i,j}|^{2}=\tr(MM^{*}).
\]
This is known as the Frobenius norm, or Schatten 2-norm, or Hilbert-Schmidt
operator, etc.

If $M=AB$ we have
\begin{equation}
|M|_{2}^{2}
= \sum_{i,j} \Bigl|\sum_{k}A_{i,k}B_{k,j} \Bigr|^{2}
\le\sum_{i,j}\Bigl(\sum_{k}|A_{i,k}|^{2}\Bigr)\Bigl(\sum_{k}|B_{k,j}|^{2}\Bigr)
=|A|_{2}^{2}|B|_{2}^{2},\label{eq:MAB-HS-1}
\end{equation}
where the inequality is Cauchy--Schwarz.

\paragraph{Representation theory.}

Let $G$ be a group. A \emph{representation }$\rho$ of $G$ with
dimension $d$ maps elements of $G$ to $d\times d$ unitary, complex
matrices so that $\rho(xy)=\rho(x)\rho(y)$. Thus, $\rho$ is a homomorphism
from $G$ to the group of linear transformations of the vector space
$\mathbb{C}^{d}$. We denote by $d_{\rho}$ the dimension of $\rho$.

If there is a non-trivial subspace $W$ of $\mathbb{C}^{d}$ that
is invariant under $\rho$, that is, $\rho(x)W\subseteq W$ for every
$x\in G$, then $\rho$ is \emph{reducible}; otherwise it is \emph{irreducible.}
Irreducible representations are abbreviated \emph{irreps }and play
a critical role in Fourier analysis. We denote by $\widehat{G}$ a
complete set of inequivalent irreducible representations of $G$.

In every group we have
\begin{equation}
\sum_{\rho\in\widehat{G}}d_{\rho}^{2}=G.\label{eq:sum-d-square}
\end{equation}

We have the following fundamental orthogonality principle.
\begin{lem}[Schur's lemma, see \cite{MR964069}, Page 11 or Lemma 2.3.3 in \cite{Wigderson-Barbados10}]
\label{lem:Schur}
Let $\rho,\psi$ be irreps. Then $\E_{x}\rho(x)_{k,h}\overline{\psi(x)}_{i,j}$ is
$0$ unless $\rho=\psi$ and $k=i$ and $h=j$, in which case it is
$1/d_{\rho}$. In particular, $\E_{x}|\rho(x)_{i,j}|^{2}=1/d_{\rho}$.
\end{lem}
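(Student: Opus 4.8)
The statement to prove is the orthogonality relation $\E_{x}\rho(x)_{k,h}\overline{\psi(x)_{i,j}}$, which vanishes unless $\rho=\psi$, $k=i$, $h=j$, in which case it equals $1/d_{\rho}$. The plan is to deduce this from the two classical forms of Schur's lemma about intertwining maps: (i) if $\rho,\psi$ are irreps and $T$ satisfies $T\rho(x)=\psi(x)T$ for all $x$, then $T=0$ unless $\rho\cong\psi$; and (ii) if $\rho$ is irreducible and $T\rho(x)=\rho(x)T$ for all $x$ over $\mathbb{C}$, then $T=\lambda I$ for a scalar $\lambda$. These I would take as known (they are the ``fundamental'' version of Schur referenced in the cited sources), or prove quickly: for (i), $\ker T$ and $\operatorname{im} T$ are invariant subspaces, hence trivial, so $T$ is zero or an isomorphism; for (ii), pick an eigenvalue $\lambda$ of $T$ (exists over $\mathbb{C}$) and note $T-\lambda I$ is a non-invertible intertwiner, hence zero.

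The key averaging trick is the standard one: for an arbitrary fixed matrix $A$, set $T:=\E_{x}\rho(x)A\psi(x)^{-1}=\E_{x}\rho(x)A\psi(x)^{*}$ (using unitarity of $\psi$). A one-line reindexing shows $\rho(y)T=T\psi(y)$ for every $y\in G$, so $T$ is an intertwiner. Now I would run this for the specific choice $A=E_{h,j}$, the matrix with a $1$ in position $(h,j)$ and zeros elsewhere; then $T_{k,i}=\E_{x}\rho(x)_{k,h}\psi(x)^{*}_{j,i}=\E_{x}\rho(x)_{k,h}\overline{\psi(x)_{i,j}}$, which is exactly the quantity we want to understand. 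When $\rho\not\cong\psi$, case (i) forces $T=0$, giving vanishing; since inequivalent irreps in $\widehat G$ are genuinely non-isomorphic, the case $\rho=\psi$ in the lemma is literally equality, so we are done with the off-diagonal-in-$\widehat G$ part. When $\rho=\psi$ (dimension $d=d_{\rho}$), case (ii) gives $T=\lambda I$, so $T_{k,i}=0$ unless $k=i$, and $T_{k,k}=\lambda$ independent of $k$; this handles the $k=i$ requirement. To identify $\lambda$ in terms of $h,j$, take the trace: $\operatorname{tr} T=d\lambda$, while $\operatorname{tr} T=\E_{x}\operatorname{tr}(\rho(x)E_{h,j}\rho(x)^{*})=\E_{x}\operatorname{tr}(E_{h,j})=\operatorname{tr}(E_{h,j})$ since $\rho(x)$ is unitary (trace is conjugation-invariant). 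But $\operatorname{tr}(E_{h,j})=\delta_{h,j}$, so $\lambda=\delta_{h,j}/d$. Combining, $\E_{x}\rho(x)_{k,h}\overline{\rho(x)_{i,j}}=\delta_{k,i}\delta_{h,j}/d_{\rho}$, which is the claim; the ``in particular'' statement is the case $k=i$, $h=j$.

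I do not expect a genuine obstacle here — this is the textbook derivation — but the one point deserving care is making sure the two reductions are applied in the right direction (the averaging produces $T$ with $\rho(y)T=T\psi(y)$, not $T\rho(y)=\psi(y)T$, so one should either state Schur's intertwining lemma in that form or transpose/adjoint as needed) and being careful that $\psi(x)^{-1}=\psi(x)^{*}$ uses unitarity of the representation, which is part of our running convention. A secondary subtlety is that the lemma as stated treats $\rho=\psi$ literally rather than up to isomorphism; this is fine precisely because $\widehat G$ is chosen to be a set of pairwise-inequivalent irreps, so ``$\rho\neq\psi$ in $\widehat G$'' really does mean ``not isomorphic,'' which is what case (i) needs.
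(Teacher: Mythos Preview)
Your proof is correct and is exactly the standard textbook derivation of the Schur orthogonality relations from the classical Schur lemma on intertwiners. The paper itself does not give a proof of this lemma at all; it simply cites \cite{MR964069} (Diaconis) and \cite{Wigderson-Barbados10} for it, so there is nothing to compare against beyond noting that your argument is the one found in those references.
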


Let $f\colon G\to\mathbb{C}$. The \emph{$\rho$-th Fourier coefficient} of
$f$ is
\[
\widehat{f}(\rho):=\E_{x}f(x)\overline{\rho(x)}.
\]

The Fourier inversion formula is then
\[
f(x)=\sum_{\rho\in\widehat{G}}d_{\rho} \tr\bigl(\widehat{f}(\rho)\rho(x)^{T}\bigr).
\]

We define the \emph{convolution} as follows: 
\[
p*q(x):=\sum_{y}p(y)q(y^{-1}x).
\]

Note that if $p$ and $q$ are distributions then $p*q$ is the distribution
obtained by sampling $x$ from $p$, $y$ from $q$, and then outputting
$xy$.

We note that under this normalization we have
\[
\widehat{p*q}(\alpha)=G \cdot \widehat{p}(\alpha)\widehat{q}(\alpha).
\]

Combining this with \ref{eq:MAB-HS-1} we obtain
\begin{equation}
|\widehat{p*q}(\alpha)|_{2}^{2}\le G^{2} \cdot |\widehat{p}(\alpha)|_{2}^{2}|\widehat{q}(\alpha)|_{2}^{2}.\label{eq:norm-conv-Fourier}
\end{equation}

Parseval's identity is
\[
\E f(x)\overline{g(x)}
=\sum_{\rho}d_{\rho} \tr\bigl(\widehat{f}(\rho)\widehat{g}(\rho)^{*}\bigr).
\]

In case $f=g$ this becomes
\[
\E|f(x)|^{2}=\sum_{\rho}d_{\rho} \tr\bigl(\widehat{f}(\rho)\widehat{f}(\rho)^{*}\bigr)=\sum_{\rho}d_{\rho}|\widehat{f}(\rho)|_{2}^{2}.
\]

\begin{fact}[Theorem 10 in Section 3.2 in \cite{Serre77}, or Theorem 9 in \cite{MR964069}] \label{fact:irrep-tensor}
  Any irrep $\rho$ of $H^{n}$ is the tensor product $\otimes_{i=1}^{n}\rho_{i}$ of $n$ irreps $\rho_{i}$ of $H$. In particular, the dimension of $\rho$ is the
product of the dimensions of the $\rho_{i}$.
\end{fact}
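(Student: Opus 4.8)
The plan is to prove the statement for a product of two groups, $G=G_1\times G_2$, and then iterate using $H^n\cong H\times H^{n-1}$. Fix $\rho_1\in\widehat{G_1}$ of dimension $d_1$ and $\rho_2\in\widehat{G_2}$ of dimension $d_2$, and define $(\rho_1\otimes\rho_2)(g_1,g_2):=\rho_1(g_1)\otimes\rho_2(g_2)$ (Kronecker product), a $d_1d_2\times d_1d_2$ matrix indexed by pairs, with factor $i$ acting on the $i$th tensor slot. This is a homomorphism because $\bigl(\rho_1(g_1)\otimes\rho_2(g_2)\bigr)\bigl(\rho_1(g_1')\otimes\rho_2(g_2')\bigr)=\rho_1(g_1g_1')\otimes\rho_2(g_2g_2')$, and it is unitary since a Kronecker product of unitaries is unitary; so $\rho_1\otimes\rho_2$ is a representation of $G$ of dimension $d_1d_2$, which already contains the ``in particular'' about dimensions.

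The two facts I need from \lemref{Schur} concern the function $x\mapsto\tr\rho(x)$ attached to a representation $\rho$: for irreps $\rho,\psi$ of any group one has $\E_x\tr\rho(x)\overline{\tr\psi(x)}=1$ if $\rho=\psi$ and $0$ otherwise. Indeed $\E_x\tr\rho(x)\overline{\tr\psi(x)}=\sum_{k,i}\E_x\rho(x)_{k,k}\overline{\psi(x)}_{i,i}$, and \lemref{Schur} kills every term unless $\rho=\psi$ and $k=i$, leaving $\sum_k 1/d_\rho=1$. Since every representation decomposes into irreps, writing $\sigma=\bigoplus_i m_i\rho_i$ gives $\E_x|\tr\sigma(x)|^2=\sum_i m_i^2$, so $\sigma$ is irreducible iff $\E_x|\tr\sigma(x)|^2=1$; and two irreps with orthogonal (hence distinct) characters are inequivalent, since equivalent representations have equal characters. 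Now $\tr\bigl(\rho_1(g_1)\otimes\rho_2(g_2)\bigr)=\tr\rho_1(g_1)\cdot\tr\rho_2(g_2)$, and because the uniform distribution on $G=G_1\times G_2$ is the product of the uniform distributions on the factors, $\E_{(g_1,g_2)}\bigl|\tr\rho_1(g_1)\,\tr\rho_2(g_2)\bigr|^2=\bigl(\E_{g_1}|\tr\rho_1(g_1)|^2\bigr)\bigl(\E_{g_2}|\tr\rho_2(g_2)|^2\bigr)=1$, so $\rho_1\otimes\rho_2$ is irreducible. The same factorization shows that for $(\rho_1,\rho_2)\neq(\rho_1',\rho_2')$ the character inner product $\E_{(g_1,g_2)}\tr(\rho_1\otimes\rho_2)(g_1,g_2)\,\overline{\tr(\rho_1'\otimes\rho_2')(g_1,g_2)}$ factors as $\bigl(\E_{g_1}\tr\rho_1\overline{\tr\rho_1'}\bigr)\bigl(\E_{g_2}\tr\rho_2\overline{\tr\rho_2'}\bigr)$, which vanishes, so $\rho_1\otimes\rho_2\not\cong\rho_1'\otimes\rho_2'$.

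It remains to check that these $|\widehat{G_1}|\cdot|\widehat{G_2}|$ pairwise inequivalent irreps are all of $\widehat{G}$. Summing squared dimensions over pairs and using \ref{eq:sum-d-square} for $G_1$ and $G_2$ gives $\sum_{\rho_1\in\widehat{G_1}}\sum_{\rho_2\in\widehat{G_2}}(d_{\rho_1}d_{\rho_2})^2=\bigl(\sum_{\rho_1}d_{\rho_1}^2\bigr)\bigl(\sum_{\rho_2}d_{\rho_2}^2\bigr)=|G_1|\cdot|G_2|=|G|$, which by \ref{eq:sum-d-square} for $G$ itself already exhausts the sum of squared dimensions of all irreps of $G$; since the tensor products are inequivalent irreps, $\widehat{G}$ contains no others. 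Applying this with $G_1=H$, $G_2=H^{n-1}$ and inducting on $n$ shows every irrep of $H^n$ is $\rho_1\otimes\cdots\otimes\rho_n$ with each $\rho_i\in\widehat{H}$, of dimension $\prod_i d_{\rho_i}$. The argument is entirely routine; the only place calling for care is extracting the character orthogonality relations (the irreducibility test and the inequivalence test) from the matrix-entry form of \lemref{Schur} as stated here, after which the product structure of the uniform measure and the counting identity \ref{eq:sum-d-square} finish the job.
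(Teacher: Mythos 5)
Your argument is correct. Note that the paper does not prove \factref{irrep-tensor} at all: it imports it as a black box, citing Theorem~10 in Section~3.2 of Serre and Theorem~9 in Diaconis. What you have written is a self-contained derivation of that cited fact from the two ingredients the paper does state, namely the matrix-entry form of Schur orthogonality (\lemref{Schur}) and the counting identity \ref{eq:sum-d-square}, and it is essentially the standard textbook route (character orthogonality gives irreducibility and pairwise inequivalence of the tensor products; summing squared dimensions shows there is no room for any other irrep; induction on $n$ handles $H^{n}\cong H\times H^{n-1}$). The steps all check out: the passage from \lemref{Schur} to $\E_{x}\tr\rho(x)\overline{\tr\psi(x)}=\delta_{\rho=\psi}$ is right, the criterion ``$\sigma$ irreducible iff $\E|\tr\sigma|^{2}=1$'' is justified once you invoke complete reducibility (which is immediate here since the paper's representations are unitary, so the orthogonal complement of an invariant subspace is invariant), and the factorization of the character inner product over the product measure is exactly what makes the two-factor case go through. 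The only stylistic remark is that you silently use that equivalent representations have equal characters and that tensoring is associative up to the natural identification; both are harmless. So the proposal is a valid proof of a statement the paper chose not to prove, and it buys self-containedness at the cost of a paragraph of standard character theory.
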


For $\rho=\otimes_{i=1}^{n}\rho_{i}$ we denote by $|\rho|$ the number
of $i$ s.t.~$\rho_{i}$ is not the trivial representation $1$.
\begin{defn}[\cite{Gowers08}]
  A group $H$ is $d$-quasirandom if every non-trivial irrep of $H$ has dimension $\ge d$.
\end{defn}

\section{Boosting uniformity}

In this section we prove \thmref{boost-3-to-m}. The proof follows
by repeated applications of the following theorem.
\begin{thm}
\label{thm:boost-k-2k}Let $H=\mathrm{SL}(2,q)$. Let $p$ be a distribution
over $H^{t}$ that is $(H^{-k},k)$-uniform for $k\ge3$ and $m=\lceil(1+c)k\rceil$.
Then the convolution of $c$ independent copies of $p$ is $H^{-m}$-uniform.
\end{thm}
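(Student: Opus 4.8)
The plan is to work entirely on the Fourier side. Write $q$ for the convolution of $c$ copies of $p$; we must show $|q(x) - H^{-t}| \le H^{-t}\cdot H^{-m}$ for every $x\in H^t$, or equivalently (by Fourier inversion over $H^t$) that $\sum_{\rho \ne 1} d_\rho \tr(\widehat q(\rho)\rho(x)^T)$ is small in absolute value. By \factref{irrep-tensor}, every irrep $\rho$ of $H^t$ is a tensor product $\otimes_i \rho_i$ of irreps of $H$, and $\widehat q(\rho) = H^{c(t-1)} \prod \widehat p(\rho) $ up to the normalization constant, so $|\widehat q(\rho)|_2 \le H^{c(t-1)}|\widehat p(\rho)|_2^c$ by iterating \eqref{eq:norm-conv-Fourier}. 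The first key step is to exploit $(H^{-k},k)$-uniformity of $p$ to kill all the low-weight Fourier coefficients: if $|\rho| \le k$, then I claim $\widehat p(\rho)$ is essentially the corresponding Fourier coefficient of the induced distribution on those $|\rho|$ coordinates, so $|\widehat p(\rho)|_2$ is controlled by the $\e$-uniformity parameter — for the trivial characters on the other coordinates it factors out exactly, and near-uniformity forces $|\widehat p(\rho)|_2 \le H^{-k}\cdot(\text{something mild})$, hence $d_\rho |\widehat p(\rho)|_2^c$ is tiny. So the sum over $\rho$ with $1 \le |\rho| \le k$ contributes at most $H^{-t - m}$ after accounting for the $\le H^t$ many such terms and the $d_\rho$ and normalization factors; the constant $c$ (number of copies) must be chosen large enough that the $c$-th power beats these counting losses.

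The second, and more essential, step handles $\rho$ with $|\rho| > k$. Here I use quasirandomness: $H = \mathrm{SL}(2,q)$ is $d$-quasirandom with $d = (q-1)/2 \ge \Omega(\sqrt H)$, so each non-trivial $\rho_i$ has $d_{\rho_i} \ge d$, giving $d_\rho \ge d^{|\rho|} > d^k$. The flattening estimate comes from Schur: since $p$ is a probability distribution, the trivial coefficient is $\widehat p(1) = 1/H^t$, and Parseval gives $\sum_\rho d_\rho |\widehat p(\rho)|_2^2 = \E_x |p(x)|^2 = |p|_2^2$, so in particular $\sum_{\rho\ne 1} d_\rho |\widehat p(\rho)|_2^2 \le |p|_2^2 \le 1$ (crude bounds on collision probability suffice; even $|p|_2^2 \le $ max probability $\le 1$ works). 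Then for any $\rho$ with $|\rho| > k$,
\[
d_\rho |\widehat p(\rho)|_2^c = \bigl(d_\rho |\widehat p(\rho)|_2^2\bigr) \cdot |\widehat p(\rho)|_2^{c-2} \le |\widehat p(\rho)|_2^{c-2},
\]
and $|\widehat p(\rho)|_2^2 \le (1/d_\rho)\cdot(\text{Parseval mass}) \le d^{-k}$, so $|\widehat p(\rho)|_2 \le d^{-k/2} \le H^{-ck}$ for an appropriate absolute constant. Raising to the power $c-2$ and multiplying by the $d_\rho$ factor from Fourier inversion (which is at most $H^t$ in total after summing, by \eqref{eq:sum-d-square}) gives a bound of the form $H^t \cdot H^{-c'k(c-2)}$; choosing $c$ a large enough absolute constant relative to the slack $m-k = ck$ makes this at most $H^{-t-m}$. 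Combining the two regimes and feeding through Fourier inversion yields $|q(x)-H^{-t}|\le H^{-t-m}$ for all $x$, i.e.\ $q$ is $(H^{-m},t)$-uniform and in particular $H^{-m}$-uniform on any $m$ coordinates.

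The main obstacle I anticipate is bookkeeping the constants: one has to verify that a single absolute constant $c$ simultaneously (i) beats the $H^t \le H^{(1+c)k}$-type counting losses when summing over low-weight $\rho$ and (ii) produces the target saving $H^{-m} = H^{-(1+c)k}$ in the high-weight regime, where the per-coefficient saving is only $H^{-\Omega(k)}$ per copy. Concretely, the exponent arithmetic needs the gain $c\cdot\Omega(k)$ (from $c$ convolutions each contributing $H^{-\Omega(k)}$) to dominate $t + m = O(ck)$; since both scale linearly in $c$, this forces the hidden constant in $\Omega(k)$ to be handled carefully, which is exactly where the quasirandomness bound $d \ge \Omega(\sqrt H)$ for $\mathrm{SL}(2,q)$ enters quantitatively — the cleanest route is to first reduce to $m^c$-scale statements at the level of a single $H$ and only then tensor up. A secondary subtlety is making the ``$\widehat p(\rho)$ for low-weight $\rho$ equals the Fourier coefficient of the marginal'' claim precise: one should note that for $\rho = \otimes\rho_i$ with $\rho_i = 1$ off a set $T$ of size $|\rho| = |T|$, $\widehat p(\rho) = \E_x p(x)\overline{\otimes_{i\in T}\rho_i(x_i)} = \widehat{p_T}(\otimes_{i\in T}\rho_i)$ where $p_T$ is the marginal on $T$, and then $\e$-uniformity of $p_T$ (with $\e = H^{-k}$) plus Schur bounds $|\widehat{p_T}(\cdot)|_2$ directly.
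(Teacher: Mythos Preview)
Your approach shares the paper's two key ideas: split the Fourier side by weight $|\rho|$, use $(\epsilon,k)$-uniformity to kill the low-weight part, and use quasirandomness plus Parseval for the high-weight part. The paper packages this differently --- it proves an $L_2$-flattening lemma (\lemref{basic-convolution}) showing $|p*p-u|_2^2\le |p-u|_2^2\cdot 2H^{m-k}d^{-(k+1)}$, iterates a constant number of times, and only then passes to $L_\infty$ via the Cauchy--Schwarz trick $|p*p-u|_\infty\le|p-u|_2^2$ (\claimref{L2-to-Linfty}). Your direct $L_\infty$-via-Fourier-inversion route is workable in principle, but as written it has a genuine gap.

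The gap is your claim that ``even $|p|_2^2\le\max_x p(x)\le 1$ works.'' It does not. With the paper's normalization, Parseval gives $d_\rho|\widehat p(\rho)|_2^2\le |p|_2^2/G$, hence $|\widehat p(\rho)|_2^2\le |p|_2^2/(G d_\rho)$. The quantity that must be $<1$ for each convolution to \emph{help} (rather than hurt) is $G\cdot|\widehat p(\rho)|_2$, since $|\widehat q(\rho)|_2\le G^{c-1}|\widehat p(\rho)|_2^c=G^{-1}(G|\widehat p(\rho)|_2)^c$. With your crude bound $|p|_2^2\le 1$ and $d_\rho\ge d^{k+1}$ you get $G|\widehat p(\rho)|_2\le (G/d^{k+1})^{1/2}$; since $|H|=q(q^2-1)$ and the minimal nontrivial dimension is $(q-1)/2$, we have $d\ge c H^{1/3}$ (not $\Omega(\sqrt H)$ as you wrote), so $G/d^{k+1}\approx H^{m-(k+1)/3}\gg 1$ whenever $m>k/3$, which is always. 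Thus more convolutions make your high-weight estimate \emph{worse}. The fix is exactly what the paper uses: $(H^{-k},k)$-uniformity gives $\max_x p(x)\le 2/H^k$, hence $|p|_2^2\le 2/H^k$, and then $G|\widehat p(\rho)|_2\le (2H^{m-k}d^{-(k+1)})^{1/2}$, which is $<1$ precisely when the constant $c_0$ in $m=\lceil(1+c_0)k\rceil$ satisfies $c_0<1/3$. Once you insert this bound, your direct argument goes through (and you will also need to restore the $G^{c-1}$ normalization factor you dropped in the final ``$H^t\cdot H^{-c'k(c-2)}$'' estimate).
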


Note for small $k$ we may have $m=k+1$. But if $k\ge c$ then $m$
is a constant factor larger than $k$.

The choice of the error parameter is not too important because it
can be boosted with convolutions:
\begin{lem}[Lemma 3.3 in \cite{GowersV-cc-int-journal}] \label{lem:good-boost}
Let $p$ and $q$ be $(\e,k)$-uniform distributions over $H^{m}$.
Then $p*q$ is $(\e^{2},k)$-uniform.
\end{lem}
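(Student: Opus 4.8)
The plan is to prove \lemref{good-boost}, i.e.\ that the convolution of two $(\e,k)$-uniform distributions over $H^m$ is $(\e^2,k)$-uniform, by working coordinate-by-coordinate. Fix any set $I$ of $k$ coordinates, $|I|=k$. Let $p_I$ and $q_I$ denote the distributions induced by $p$ and $q$ on $H^I\cong H^k$. Since convolution is defined componentwise on $H^m$, the marginal of $p*q$ on the coordinates in $I$ is exactly $p_I*q_I$. So it suffices to show: if $p_I,q_I$ are $\e$-uniform distributions on the group $H^k$, then $p_I*q_I$ is $\e^2$-uniform on $H^k$. This reduces the whole statement to the case $k=m$, i.e.\ to a purely one-line fact about convolving two nearly-uniform distributions on a single (product) group.

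Now work on $G:=H^k$. Write $u$ for the uniform distribution on $G$, so $\e$-uniformity of $p_I$ means $\max_x |p_I(x)-u(x)| = \max_x|p_I(x)-1/G| \le \e/G$; equivalently, writing $f:=p_I-u$ and $g:=q_I-u$, we have $\|f\|_\infty\le\e/G$ and $\|g\|_\infty\le\e/G$. The key algebraic point is that $u$ is the identity for convolution restricted to the relevant subspace: $p_I*u = u*q_I = u$ (convolving any distribution with the uniform distribution yields the uniform distribution). Hence
\[
p_I*q_I - u = (u+f)*(u+g) - u = u*g + f*u + f*g = f*g,
\]
using $u*g=u*(q_I-u)=u-u=0$ and likewise $f*u=0$. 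So the entire error of the convolved distribution is captured by $f*g$, and it remains to bound $\|f*g\|_\infty$.

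For the final bound, estimate $|f*g(x)| = |\sum_{y\in G} f(y)\,g(y^{-1}x)| \le \sum_y |f(y)|\cdot\|g\|_\infty \le \bigl(\sum_y|f(y)|\bigr)\cdot(\e/G)$. Then $\sum_y|f(y)| \le G\cdot\|f\|_\infty \le G\cdot(\e/G)=\e$, so $|f*g(x)|\le \e\cdot(\e/G) = \e^2/G$ for every $x$, which is exactly $\e^2$-uniformity of $p_I*q_I$. Since $I$ was an arbitrary set of $k$ coordinates, $p*q$ is $(\e^2,k)$-uniform. I don't anticipate a real obstacle here; the only things to be careful about are (i) confirming that marginalization commutes with componentwise convolution on $H^m$ (it does, since the product $xy$ in $H^m$ restricted to coordinates in $I$ depends only on the restrictions of $x$ and $y$ to $I$), and (ii) keeping the normalization straight, since the paper uses the un-normalized convolution $p*q(x)=\sum_y p(y)q(y^{-1}x)$ and the un-normalized inner-product conventions — with these conventions the computation above goes through verbatim. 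One could alternatively phrase the last step via \eqref{norm-conv-Fourier} on the Fourier side, but the direct $\ell_\infty$/$\ell_1$ estimate is cleaner and avoids invoking representation theory at all.
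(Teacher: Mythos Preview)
Your proof is correct and is essentially the paper's own argument. The paper also reduces to the case $m=k$ and then observes the identity $p*q(x)-1/G=\sum_y\bigl(p(y)-1/G\bigr)\bigl(q(y^{-1}x)-1/G\bigr)$, which is exactly your $f*g$; it then bounds each summand by $(\e/G)^2$ and sums over $G$ terms, whereas you split it as $\|g\|_\infty\cdot\|f\|_1$, but both routes give $\e^2/G$ on the nose.
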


\begin{proof}
It is enough to consider the case $m=k$. We have
\[
\bigl|p*q(x)-1/H^{t}\bigr|
=\Bigl|\sum_{y}\bigl(p(y^{-1})-1/H^{t}\bigr) \bigl(q(yx)-1/H^{t}\bigr)\Bigr|
\le \sum_{y}(\e/G)^{2}=\e^{2}/H^{t}. \qedhere
\]
\end{proof}
In the rest of this section, we prove \thmref{boost-k-2k}. The proof
involves an excursion to 2-norms. The main step is the following new
flattening lemma which can be seen as a $k$-wise variant of the flattening
lemmas discovered in \cite{Gowers08,BabaiNP08,GowersV-cc-int-journal}.
\begin{lem}
\label{lem:basic-convolution}Let $p$ be a distribution over $H^{m}$
where $H$ is $d$-quasirandom. Suppose $p$ is $(H^{-k},k)$-uniform.
Then $|p*p-u|_{2}^{2}\le|p-u|_{2}^{2}\cdot2\cdot H^{m-k}d^{-(k+1)}$.
\end{lem}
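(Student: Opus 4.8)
The plan is to work entirely in the Fourier domain on $H^m$, exploiting both the $(H^{-k},k)$-uniformity hypothesis and the $d$-quasirandomness of $H$. Write $u$ for the uniform distribution on $H^m$ and set $f := p - u$. The starting point is Parseval: $|p*p - u|_2^2 = \sum_{\rho} d_\rho |\widehat{p*p}(\rho) - \widehat{u}(\rho)|_2^2$. Since $u$ has $\widehat{u}(1) = 1$ (at the trivial irrep) and $\widehat{u}(\rho) = 0$ otherwise, and since convolution multiplies Fourier coefficients (up to the $H^m$ normalization factor), the trivial-irrep term vanishes and for every nontrivial $\rho$ we have $\widehat{p*p}(\rho) = H^m \widehat p(\rho)^2 = H^m \widehat f(\rho)^2$. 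So the whole quantity equals $\sum_{\rho \neq 1} d_\rho |H^m \widehat f(\rho)^2|_2^2 = H^{2m}\sum_{\rho\neq 1} d_\rho |\widehat f(\rho)^2|_2^2$, and by \eqref{MAB-HS-1} (submultiplicativity of the Frobenius norm) this is at most $H^{2m}\sum_{\rho \neq 1} d_\rho |\widehat f(\rho)|_2^4$.

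The next step is to bound each factor $|\widehat f(\rho)|_2^2$ in two different ways and combine them. First, the key consequence of the $(H^{-k},k)$-uniformity hypothesis: I claim all Fourier coefficients of $f = p-u$ of weight $|\rho| \le k$ vanish. Indeed, by \factref{irrep-tensor} an irrep $\rho$ of $H^m$ factors as $\otimes_i \rho_i$; if $|\rho| \le k$ then $\widehat f(\rho)$ depends only on the marginal of $p$ on the (at most $k$) coordinates where $\rho_i \neq 1$, and since each such marginal is $H^{-k}$-uniform — wait, it is $(H^{-k},k)$-uniform so each $\le k$-coordinate marginal is $H^{-k}$-close to uniform — its nonzero-weight Fourier coefficients are not exactly zero. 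This is the subtlety: with the error parameter $H^{-k}$ we only get that low-weight coefficients are \emph{small}, not zero. So instead I split $\sum_{\rho\neq 1} d_\rho |\widehat f(\rho)|_2^4$ into the part with $|\rho| \le k$ and the part with $|\rho| > k$. For the high-weight part, use $|\widehat f(\rho)|_2^2 \le H^{-m}|f|_2^2 = H^{-m}|p-u|_2^2$ — this follows from the fact that $|\widehat f(\rho)|_2^2 \le |\widehat f(\rho)|_1^2 \le (\E_x|f(x)|)^2 \cdot (\text{something})$, more cleanly: $d_\rho|\widehat f(\rho)|_2^2 \le \sum_\rho d_\rho |\widehat f(\rho)|_2^2 = \E|f|^2 = H^{-m}|f|_2^2$, so $|\widehat f(\rho)|_2^2 \le H^{-m}|f|_2^2/d_\rho \le H^{-m}|f|_2^2 d^{-(k+1)} \cdot (d^{k+1}/d_\rho)$. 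Here quasirandomness enters: if $|\rho| > k$ then at least $k+1$ of the factors $\rho_i$ are nontrivial, each contributing dimension $\ge d$, so $d_\rho \ge d^{k+1}$. Hence $|\widehat f(\rho)|_2^2 \le H^{-m}|f|_2^2 d^{-(k+1)}$ for every high-weight $\rho$. Plugging this bound for \emph{one} of the two factors $|\widehat f(\rho)|_2^2$ in the product $|\widehat f(\rho)|_2^4$, and then using $\sum_{\rho:\,|\rho|>k} d_\rho|\widehat f(\rho)|_2^2 \le \E|f|^2 = H^{-m}|f|_2^2$ for the remaining sum over the other factor, gives the high-weight contribution $\le H^{2m} \cdot (H^{-m}|f|_2^2 d^{-(k+1)}) \cdot (H^{-m}|f|_2^2) = |f|_2^4 d^{-(k+1)}$.

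The main obstacle is the low-weight part $\sum_{1 \le |\rho| \le k} d_\rho |\widehat f(\rho)|_2^4$, and handling it is where the hypothesis $(H^{-k},k)$-uniform (as opposed to $k$-uniform) must be paid for. The idea is that for such $\rho$, $\widehat f(\rho)$ is controlled by the closeness of a $\le k$-coordinate marginal of $p$ to uniform: one shows $|\widehat f(\rho)|_2^2 \le H^{-|\rho|}\cdot(\text{marginal } \ell_2\text{-error})^2 \le H^{-|\rho|} \cdot H^{-2k}\cdot H^{|\rho|}/d_\rho$-type bound, i.e. something exponentially small in $k$, small enough that after summing over the at most $\binom{m}{k}H^{\,O(k)}$ such irreps (the number of weight-$\le k$ irreps of $H^m$) and multiplying by $H^{2m}$, the total is dominated by, say, $H^{m-k}d^{-(k+1)}|f|_2^4$ — here one also uses the crude lower bound $|f|_2^2 = |p-u|_2^2 \ge$ (something) is \emph{not} available, so instead one bounds the low-weight contribution in absolute terms and checks the target RHS $2 H^{m-k}d^{-(k+1)}|p-u|_2^2$ absorbs it, possibly needing $|p-u|_2^2$ to appear; a cleaner route is to bound $|\widehat f(\rho)|_2^2 \le |f|_2^2 H^{-m}$ for \emph{one} factor again and the marginal-error bound for the other, so the low-weight part is $\le H^{2m}\cdot H^{-m}|f|_2^2 \cdot \max_{|\rho|\le k}|\widehat f(\rho)|_2^2 \cdot (\text{number of low-weight }\rho)$, and then verify the arithmetic $H^m \cdot H^{-2k}\cdot H^{O(k)} \cdot \binom mk \le H^{m-k}d^{-(k+1)}$ using $m \le (1+c)k$ and $d = \Theta(q)$, $H = \Theta(q^3)$ for $\mathrm{SL}(2,q)$. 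Assembling the low-weight and high-weight estimates and the factor-of-$2$ slack yields $|p*p-u|_2^2 \le |p-u|_2^2 \cdot 2 H^{m-k} d^{-(k+1)}$, which is the claim.
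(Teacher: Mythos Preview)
Your overall plan---Parseval, submultiplicativity of $|\cdot|_2$, and a weight-based split---matches the paper's. The high-weight case is also essentially right: bound one copy of $|\widehat f(\rho)|_2^2$ via $d_\rho \ge d^{k+1}$ and Parseval, sum the other. (Two omissions here: your Parseval is off by a factor $G=H^m$---the paper's normalization gives $G^3\sum_{\rho\ne1}d_\rho|\widehat p(\rho)|_2^4$, not $G^2$---and you never explicitly use the bound $|p-u|_2^2\le|p|_2^2\le 2H^{-k}$, which the paper gets from $(1,k)$-uniformity; this step is what turns the high-weight contribution into $|p-u|_2^2\cdot 2H^{m-k}d^{-(k+1)}$ and is where the factor $H^{m-k}$ in the statement comes from.)

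The genuine gap is the low-weight case. Your instinct---control $|\widehat f(\rho)|_2^2$ through the marginal's closeness to uniform---is the correct one, but you then try to \emph{sum} over all low-weight $\rho$, which forces a factor of order $\binom{m}{k}H^{O(k)}$ into the estimate. You cannot absorb this for general $m$, and indeed your closing arithmetic invokes ``$m\le(1+c)k$'' and ``$H=\mathrm{SL}(2,q)$''---neither of which is a hypothesis of the lemma. So as written your argument does not establish the lemma as stated.

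The paper avoids this by pulling out the maximum \emph{before} splitting: from
\[
|p*p-u|_2^2\;\le\;G^3\sum_{\rho\ne1}d_\rho|\widehat p(\rho)|_2^4\;\le\;G^2\,|p-u|_2^2\cdot\max_{\rho\ne1}|\widehat p(\rho)|_2^2,
\]
it suffices to bound $G^2|\widehat p(\rho)|_2^2$ \emph{pointwise} over nontrivial $\rho$, with no sum over low-weight irreps at all. For $|\rho|\le k$, the marginal of $p$ on the nontrivial coordinates is $H^{-k}$-uniform, and \claimref{rep-bound-good-L_2} gives $G^2|\widehat p(\rho)|_2^2\le d_\rho H^{-2k}$; since $d_\rho\le H^{k/2}$ (it is the dimension of an irrep of $H^{|\rho|}$ with $|\rho|\le k$), this is at most $H^{-1.5k}\le d^{-(k+1)}$, already below the high-weight bound $2H^{m-k}d^{-(k+1)}$. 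No irrep counting, no constraint on $m$, and nothing specific to $\mathrm{SL}(2,q)$ beyond quasirandomness.
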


We need the following couple of claims to go back-and-forth between
$\e$-uniform and 2-norms.
\begin{claim}
\label{claim:L2-to-Linfty}$|p*p-u|_{\infty}\le|p-u|_{2}^{2}$.
\end{claim}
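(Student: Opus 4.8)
Write $G = H^m$ and let $u$ denote the uniform distribution on $G$, so $u(x) = 1/G$ for all $x$. The plan is to first reduce to a convolution of the ``centered'' function $p - u$ with itself, and then apply Cauchy--Schwarz coordinatewise.

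The key observation is that $u$ acts as a two-sided identity for convolution against distributions: for any distribution $q$ over $G$ we have $q * u (x) = \sum_y q(y) u(y^{-1}x) = (1/G)\sum_y q(y) = 1/G$, and similarly $u * q = u$; in particular $u*u = u$. Since convolution is bilinear, expanding $(p-u)*(p-u) = p*p - p*u - u*p + u*u = p*p - u - u + u$ gives the identity $p*p - u = (p-u)*(p-u)$. This is the one nontrivial step, and everything else is routine.

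Now fix $x \in G$. By the definition of convolution and Cauchy--Schwarz,
\[
\bigl|(p-u)*(p-u)(x)\bigr|
= \Bigl| \sum_{y}(p-u)(y)\,(p-u)(y^{-1}x) \Bigr|
\le \Bigl(\sum_{y}|(p-u)(y)|^{2}\Bigr)^{1/2}\Bigl(\sum_{y}|(p-u)(y^{-1}x)|^{2}\Bigr)^{1/2}.
\]
As $y$ ranges over $G$ the element $y^{-1}x$ ranges over all of $G$ as well, so the second factor equals $|p-u|_{2}$, and the first factor is also $|p-u|_{2}$. Hence $|(p-u)*(p-u)(x)| \le |p-u|_{2}^{2}$ for every $x$, and combining with the identity above yields $|p*p - u|_{\infty} \le |p-u|_{2}^{2}$.

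I do not expect any real obstacle here; the only thing to get right is the algebraic identity $p*p-u=(p-u)*(p-u)$, which relies on $p$ being a genuine distribution (so that $p*u=u*p=u$), and then the coordinatewise Cauchy--Schwarz bound together with the change of variables $y \mapsto y^{-1}x$.
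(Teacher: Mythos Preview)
Your proof is correct and is essentially the same as the paper's: both establish the identity $p*p-u=(p-u)*(p-u)$ and then apply Cauchy--Schwarz together with the change of variables $y\mapsto y^{-1}x$. The paper compresses this into one line, while you spell out the absorbing property $p*u=u*p=u$ and the absolute-value version of the bound explicitly, but the content is identical.
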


\begin{proof}
$(p*p-1/G)(x)=\sum_{y}(p(y)-1/G)(p(y^{-1}x)-1/G)\le\sum_{x}(p(x)-1/G)^{2}.$
The last inequality is Cauchy--Schwarz.
\end{proof}
\begin{claim}
\label{claim:rep-bound-good-L_2}Let $p$ be an $\e$-uniform distribution
over a group $G$, and let $\rho$ be a non-trivial representation
of $\rho$ with dimension $d_{\rho}$. Then $|\widehat{p}(\rho)|_{2}^{2}\le d_{\rho}\e^{2}G^{-2}$
\end{claim}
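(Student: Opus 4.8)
\textbf{Proof proposal for Claim~\ref{claim:rep-bound-good-L_2}.}
The plan is a direct calculation. First I would rewrite the Fourier coefficient in un-normalized form and subtract off the uniform part: since $\widehat p(\rho)=\E_x p(x)\overline{\rho(x)}=\tfrac1G\sum_x p(x)\overline{\rho(x)}$ and, because $\rho$ is a non-trivial irrep, Schur's lemma (\lemref{Schur}, orthogonality with the trivial representation) gives $\sum_x\rho(x)=0$, we have
\[
\widehat p(\rho)=\frac1G\sum_x\bigl(p(x)-1/G\bigr)\overline{\rho(x)}.
\]
Write $f(x):=p(x)-1/G$, so that $|\widehat p(\rho)|_2^2=G^{-2}\bigl|\sum_x f(x)\overline{\rho(x)}\bigr|_2^2$.

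Next I would expand the Frobenius norm entrywise and apply Cauchy--Schwarz in the variable $x$ to each entry:
\[
\Bigl|\sum_x f(x)\overline{\rho(x)}\Bigr|_2^2
=\sum_{i,j}\Bigl|\sum_x f(x)\overline{\rho(x)_{i,j}}\Bigr|^2
\le\sum_{i,j}\Bigl(\sum_x|f(x)|^2\Bigr)\Bigl(\sum_x|\rho(x)_{i,j}|^2\Bigr)
=|f|_2^2\sum_x|\rho(x)|_2^2 .
\]
Since each $\rho(x)$ is unitary, $|\rho(x)|_2^2=\tr(\rho(x)\rho(x)^*)=d_\rho$, so $\sum_x|\rho(x)|_2^2=G\,d_\rho$. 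Finally, $\e$-uniformity says $|f(x)|=|p(x)-1/G|\le\e/G$ for every $x$, hence $|f|_2^2\le G\cdot(\e/G)^2=\e^2/G$. Combining, $\bigl|\sum_x f(x)\overline{\rho(x)}\bigr|_2^2\le(\e^2/G)\cdot G d_\rho=\e^2 d_\rho$, and dividing by $G^2$ gives $|\widehat p(\rho)|_2^2\le d_\rho\e^2 G^{-2}$, as claimed.

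There is no real obstacle here; the only point that needs care is the first step, namely recalling that $\sum_x\overline{\rho(x)}=0$ for a non-trivial (irreducible) $\rho$, which is exactly the content of Schur's lemma applied against the trivial representation, and which is what lets us replace $p$ by the mean-zero function $f$ before invoking the pointwise bound from $\e$-uniformity. Everything else is Cauchy--Schwarz plus unitarity of $\rho$.
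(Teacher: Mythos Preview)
Your proof is correct and follows essentially the same approach as the paper's: both subtract off the uniform part using Schur's lemma against the trivial representation, then bound each matrix entry via Cauchy--Schwarz, and finally sum using $\sum_x|\rho(x)|_2^2=G\,d_\rho$. The only cosmetic difference is that the paper obtains this last identity by invoking Schur's lemma a second time ($\E_x|\rho(x)_{i,j}|^2=1/d_\rho$), whereas you use unitarity of $\rho(x)$ directly; the two are equivalent here.
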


\begin{proof}
The LHS is
\begin{align*}
\sum_{i,j}|\widehat{p}(\rho)_{i,j}|^{2}
 &=\sum_{i,j}|\mathbb{E}_{x} \bigl[p(x)\overline{\rho(x)}]_{i,j} \bigr|^2\\
 & =\sum_{i,j} \bigl|\mathbb{E}_{x}\bigl[(G^{-1}+\e_{x}G^{-1})\overline{\rho(x)}\bigr]_{i,j} \bigr|^{2} && \text{ (for some }\e_{x}\text{ with }|\e_{x}|\le\e\text{)} \\
 & =\sum_{i,j} \bigl|\mathbb{E}_{x}\bigl[\e_{x}G^{-1}\overline{\rho(x)}_{i,j}\bigr] \bigr|^2 && \text{ (by \prettyref{lem:Schur} with $\psi:=1$, using that $\rho$ is non-trivial)}\\
 & \le G^{-2} \sum_{i,j}\mathbb{E}_{x} \bigl[\e_{x}^{2} \cdot |\overline{\rho(x)}_{i,j}|\bigr]^{2}\\
 & \le G^{-2}\sum_{i,j}\e^{2}\mathbb{E}_{x}|\overline{\rho(x)}_{i,j}|^{2}\\
 & =G^{-2}\sum_{i,j}\e^{2}/d_{\rho} && \text{ (by \prettyref{lem:Schur}, see ``in particular'' part)}\\
 & =G^{-2}d_{\rho}\e^{2}. \qedhere
\end{align*}
\end{proof}
\begin{proof}[Proof of \lemref{basic-convolution}]
 Write $G$ for the group $H^{m}$. For any distribution $q$ we
have
\[
|q-u|_{2}^{2}=|q|_{2}^{2}-1/G=G\sum_{\rho}d_{\rho}|\widehat{q}(\rho)|_{2}^{2}-1/G=G\sum_{\rho\ne1}d_{\rho}|\widehat{q}(\rho)|_{2}^{2}.
\]
In our case $q=p*p$, and using \ref{eq:norm-conv-Fourier} and the
above equality we bound the RHS by
\[
\le G^{3}\sum_{\rho\ne1}d_{\rho}|\widehat{p}(\rho)|_{2}^{4}\le G^{2}\cdot|p-u|_{2}^{2}\cdot\max_{\rho\ne1}|\widehat{p}(\rho)|_{2}^{2}.
\]
It remains to bound $G^{2}\max_{\rho\ne1}|\widehat{p}(\rho)|_{2}^{2}$.
We consider two cases:

\smallskip

If $|\rho|>k$, then $d_{\rho}\ge d^{k+1}$ by \factref{irrep-tensor},
so we simply use Parseval to bound 
\[
G^{2}|\widehat{p}(\rho)|_{2}^{2}\le G|p|_{2}^{2}/d_{\rho}\le G|p|_{2}^{2}/d^{k+1}.
\]
We also have $|p|_{2}^{2}\le (\max_{x}p(x)) \cdot \sum_{x}p(x)=\max_{x}p(x)\le2/H^{k}$,
because $p$ is in particular $(1,k)$-uniform. Hence, we get a bound
of $G \cdot 2 \cdot H^{-k}d^{-(k+1)}$, as desired.

\smallskip

If $|\rho|\le k$, then restrict to the non-trivial coordinates of
$\rho$. On those coordinates, $p$ induces a distribution that is $H^{-k}$-uniform.
By \claimref{rep-bound-good-L_2}, we have
\[
G^{2}|\widehat{p}(\rho)|_{2}^{2}\le d_{\rho}H^{-2k}.
\]

Note $d_{\rho}\le H^{k/2}$ by \ref{eq:sum-d-square}. Thus, we obtain
a bound of $H^{-1.5k}\le d^{-(k+1)}$.
\end{proof}
We can now present the proof of \thmref{boost-k-2k}.
\begin{proof}[Proof of \thmref{boost-k-2k}.]
 It is known that $H$ is $\ge cH^{1/3}$-quasirandom, a proof can
be found in \cite{DavidoffSV2003elementary}. Hence, the parameter
$d^{-(k+1)}$ in \lemref{basic-convolution} is $\le cH^{-(k+1)/3}\le H^{-ck}$
for any $k\ge3$. Also, we have $|p-u|_{2}^{2}=|p|_{2}^{2}-1/G$. If
$p$ is $(H^{-k},k)$-uniform then $|p|_{2}^{2}\le\max_{x}p(x)\le2/H^{k}$.
Moreover, the uniformity parameter is maintained when taking convolutions.
So one can apply the lemma a constant number of times to drive the
$L_{2}$ norm to $H^{-m}$, and then convolve one more time to obtain
a distribution that is $H^{-m}$-uniform via \claimref{L2-to-Linfty}.
\end{proof}

\section{Proof of \thmref{cc-bound}}

Let $m:=2^{k}$. As noted in \cite{GowersV-cc-int-journal}, an application
of the box norm (Corollary 3.11 in \cite{ViW-GF2}) shows that the
LHS in the conclusion is $\le cH2^{d}$ times the statistical distance
between the uniform distribution over $H^{m}$ and the convolution
of $t$ independent copies of the following distribution $s$ over
$H^{m}$.
\begin{defn}
\label{def:box-norm-dist}Pick $u_{i}^{0},u_{i}^{1}$ for $i\in[k]$
uniformly from $H$. For $x\in[2]^{k}$ the $x$ coordinate $s(x)$
of $s$ is defined to be $\prod_{i\in[k]}u_{i}^{x_{i}}$.
\end{defn}

\begin{claim}
$s$ is $3$-uniform.
\end{claim}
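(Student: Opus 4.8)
The plan is to prove the stronger chain of statements that $s$ is $1$-, $2$-, and $3$-uniform, via a simple \emph{absorption} argument exploiting that in a group both left and right multiplication by a fixed element are bijections, together with the observation that among any three distinct strings in $[2]^{k}$ some coordinate $i\in[k]$ sees not-all-equal bits.

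First I would record the following absorption principle. Suppose $x^{(1)},\dots,x^{(r)}\in[2]^{k}$ are distinct and there are $i\in[k]$ and $b\in[2]$ with $x^{(1)}_{i}=b$ but $x^{(j)}_{i}\neq b$ for all $j\ge 2$. Then the variable $u:=u_{i}^{b}$ occurs in the product $s(x^{(1)})$ and in none of $s(x^{(2)}),\dots,s(x^{(r)})$; writing $s(x^{(1)})=AuB$ with $A,B$ products of the other variables $u_{i'}^{b'}$, the map $u\mapsto AuB$ is a bijection of $H$. Hence, conditioning on all variables other than $u$ — which fixes $A$, $B$, and all of $s(x^{(2)}),\dots,s(x^{(r)})$, while leaving $u$ uniform and independent — the element $s(x^{(1)})$ becomes uniform on $H$ and independent of $(s(x^{(2)}),\dots,s(x^{(r)}))$. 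Averaging over the conditioning gives $\Pr[\,\forall j:\ s(x^{(j)})=g_{j}\,]=\tfrac1H\Pr[\,\forall j\ge 2:\ s(x^{(j)})=g_{j}\,]$ for all $g_{1},\dots,g_{r}\in H$, so $(s(x^{(1)}),\dots,s(x^{(r)}))$ is uniform on $H^{r}$ if and only if $(s(x^{(2)}),\dots,s(x^{(r)}))$ is uniform on $H^{r-1}$.

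Next I would apply this twice. Given distinct coordinates $x,y,z$, since they are not all equal there is a column $i$ whose pattern $(x_{i},y_{i},z_{i})$ is non-constant; let $b$ be its minority value, realized by exactly one of the three strings. Relabeling $x,y,z$ so that this string is listed first, the absorption principle reduces $3$-uniformity on $\{x,y,z\}$ to $2$-uniformity on the remaining two strings; those are still distinct, so they differ in some column, and a second application reduces to $1$-uniformity of a single coordinate $s(w)$. Finally $s(w)=\prod_{i\in[k]}u_{i}^{w_{i}}$ is a product of $k$ mutually independent uniform elements of $H$, hence uniform on $H$. This closes the argument; in fact it shows $(s(x),s(y),s(z))$ is uniform whenever $x,y,z$ are distinct, regardless of $k$ (and $4$-uniformity genuinely fails, e.g.\ $s(11)=s(10)s(00)^{-1}s(01)$ when $k=2$). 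The only point needing care is non-commutativity: one must keep $u$ in place as $AuB$ rather than sliding it to an end, but since $a\mapsto aub$ is a bijection for every fixed $a,b$ this is a matter of bookkeeping rather than a genuine obstacle.
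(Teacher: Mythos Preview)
Your proof is correct and takes essentially the same approach as the paper: both arguments isolate a variable $u_{i}^{b}$ that appears in exactly one of the three products and use the bijectivity of $u\mapsto AuB$ to peel it off, then repeat. The paper carries this out via an explicit two-coordinate case analysis, whereas you package it as a clean ``absorption principle'' and iterate; the underlying idea is identical.
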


\begin{proof}
Pick a coordinate $i$ s.t.~$x_{i}\ne y_{i}$. W.l.og.~let $i=0$,
$x_{0}=0$, and $y_{0}=1$. Now $z_{0}$ is equal to either $x_{0}$
or $y_{0}$. Assume w.l.o.g.~that $z_{0}=y_{0}$. Consider a coordinate
$j$ where $z_{j}\ne y_{j}$. Assume again w.l.o.g.~that $j=1$.
We can fix all other $u_{i}$ with $i\ge2$ and prove 3-uniformity
just considering those two coordinates. For concreteness, details
follow.

Up to swapping $y$ and $z$ there are only two cases to consider.
The first is
\begin{align*}
x & =00\\
y & =10\\
z & =11.
\end{align*}

In this case we can fix arbitrarily the $u$ corresponding to $y$,
and then $x$ is uniform because of $u_{0}^{0}$ and $z$ because
of $u_{1}^{1}$.

Alternatively,
\begin{align*}
x & =01\\
y & =10\\
z & =11.
\end{align*}

In this case we can similarly fix arbitrarily the $u$ corresponding
to $z$.
\end{proof}
We note that $s$ is not $4$-uniform, again just considering two
coordinates.

To conclude the proof of \thmref{cc-bound}, note that the convolution
of $m^{c}$ copies of $s$ is $(H^{-m},m)$-uniform by \thmref{boost-3-to-m}.
By \lemref{good-boost} the convolution of $t$ copies is then $(H^{-m\cdot t/m^{c}},m)$-uniform,
and the result follows.

\section{From $(\protect\e,k)$-uniform to $k$-uniform}

In this section we prove the following generalization of \cite{DBLP:journals/rsa/RubinfeldX13}.
\begin{thm}
\label{thm:agm}Let $p$ be a distribution on $G=H^{m}$ s.t.~$|\widehat{p}(\rho)|_{2}\le\eps/G$
for every $\rho:$$|\rho|\in[1,k]$. Then $p$ has distance at most
$3(mH)^{2k}\e$ from a $k$-uniform distribution.
\end{thm}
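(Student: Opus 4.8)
The plan is to mimic the Alon--Goldreich--M\'ansour / Rubinfeld--Xie argument, but working directly in the non-abelian Fourier basis, the key point being that truncating a real-valued function to a fixed weight level preserves reality. Concretely, given the distribution $p$ on $G=H^m$, write its Fourier expansion $p(x)=\sum_\rho d_\rho\tr(\widehat p(\rho)\rho(x)^T)$ and let $q$ be the function obtained by zeroing out every Fourier coefficient $\widehat p(\rho)$ with $1\le|\rho|\le k$, i.e.\ $q=u+\sum_{|\rho|>k}d_\rho\tr(\widehat p(\rho)\rho(x)^T)$ where the trivial representation contributes the uniform term $1/G$. The first step is to check $q$ is real-valued: since $p$ is real, $\widehat p(\bar\rho)=\overline{\widehat p(\rho)}$, and complex conjugation of an irrep $\rho=\otimes_i\rho_i$ of $H^m$ is $\otimes_i\overline{\rho_i}$, which has the same number of non-trivial tensor factors, so the set $\{\rho:|\rho|\in[1,k]\}$ is closed under conjugation and the truncation commutes with the reality constraint. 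Hence $q$ is a real function summing to $1$ (the total mass is governed by the trivial coefficient, which we kept).

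The second step is to bound $|p-q|_\infty$. By Fourier inversion $p(x)-q(x)=\sum_{1\le|\rho|\le k}d_\rho\tr(\widehat p(\rho)\rho(x)^T)$, so $|p(x)-q(x)|\le\sum_{1\le|\rho|\le k}d_\rho|\tr(\widehat p(\rho)\rho(x)^T)|\le\sum_{1\le|\rho|\le k}d_\rho|\widehat p(\rho)|_2|\rho(x)|_2=\sum_{1\le|\rho|\le k}d_\rho|\widehat p(\rho)|_2\sqrt{d_\rho}$, using Cauchy--Schwarz for the trace pairing together with $|\rho(x)|_2^2=d_\rho$ since $\rho(x)$ is unitary. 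Plugging in the hypothesis $|\widehat p(\rho)|_2\le\e/G$ gives $|p(x)-q(x)|\le(\e/G)\sum_{1\le|\rho|\le k}d_\rho^{3/2}$. Now I count: an irrep $\rho$ with $|\rho|=j$ is determined by a choice of $j$ coordinates (at most $\binom mj\le m^j$ ways) and a non-trivial irrep of $H$ in each, and $\sum_{\rho}d_\rho^{?}$ over $H$ can be crudely bounded by $H$ to some power; more simply, $\sum_{\text{all }\rho\text{ of }H^m}d_\rho^2=H^m=G$, and $d_\rho\le\sqrt G=H^{m/2}$, so $\sum_{1\le|\rho|\le k}d_\rho^{3/2}\le H^{m/2}\sum_{|\rho|\le k}d_\rho^{1/2}$; bounding $\sum_{|\rho|\le k}d_\rho\le\sum_{\text{all}}d_\rho^2=G$ and $\#\{\rho:|\rho|\le k\}\le (mH)^k$, Cauchy--Schwarz gives $\sum_{|\rho|\le k}d_\rho^{1/2}\le (mH)^{k/2}\sqrt G$, whence $|p-q|_\infty\le (\e/G)\cdot H^{m/2}(mH)^{k/2}\sqrt G\le (mH)^k\e/G$ after absorbing constants and using $H^{m/2}\sqrt G= G$. (The exact bookkeeping just has to land inside $(mH)^{2k}$, which gives slack.) Thus $q$ is a real function, summing to $1$, with $|q(x)-1/G|\le|q(x)-p(x)|+|p(x)-1/G|$ — but we don't even need $p$ close to uniform; we only need $q$ close to a genuine distribution.

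The third step converts $q$ into an actual $k$-uniform \emph{distribution} $p'$ by clipping the (possibly slightly negative) values and renormalizing, and controlling the damage. Let $q^+(x)=\max(q(x),0)$; then $\sum_x(q(x)-q^+(x))\le\sum_x\max(-q(x),0)\le G\cdot|p-q|_\infty$ provided $p\ge0$ forces $q\ge-|p-q|_\infty$ pointwise, so $|q^+|_1=1+\delta$ with $|\delta|\le G|p-q|_\infty\le(mH)^k\e$. Set $p'=q^+/|q^+|_1$. Then $|p'-q|_1\le|q^+-q|_1+|q^+|_1\cdot|1/|q^+|_1-1|\le 2\delta\le 2(mH)^k\e$, and $|p-p'|_1\le|p-q|_1+|q-p'|_1\le G|p-q|_\infty+2(mH)^k\e\le 3(mH)^{2k}\e$, which is the claimed distance (the statistical distance is half the $L_1$ distance, but with the generous $(mH)^{2k}$ constant this is immaterial and can be tightened in the writeup). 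The remaining point is that $p'$ is $k$-uniform: for any set $T$ of at most $k$ coordinates, the marginal of $q$ on $T$ equals the marginal of $u$ on $T$, because every surviving Fourier mode $\rho$ of $q$ has $|\rho|>k$ and hence is non-trivial on some coordinate outside $T$, so it integrates to zero under the marginalization (this is the standard fact that marginalizing kills any mode supported — even partly — outside $T$). Wait: I should double-check this — marginalizing over the complement of $T$ of a tensor mode $\otimes_i\rho_i$ gives zero as soon as \emph{some} $\rho_i$ with $i\notin T$ is non-trivial, which holds when $|\rho|>|T|$, hence for all surviving modes; so the $T$-marginal of $q$ is exactly uniform. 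Since $p'$ is obtained from $q$ by a global clip-and-renormalize, and clipping does \emph{not} in general preserve marginals — this is the subtle point.

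Here I expect the main obstacle, and the fix follows Rubinfeld--Xie: one does not clip $q$ directly but instead first argues that $q$ is already $k$-uniform as a \emph{signed} measure of total mass $1$, then moves $q$ to a nonnegative $k$-uniform measure by adding a small multiple of the uniform distribution before clipping, or by a more careful rounding that acts independently of the $T$-coordinates. Concretely, a clean route: let $p''=(q+\eta u)/(1+\eta)$ for $\eta:=G|p-q|_\infty$; then $p''\ge (q-|p-q|_\infty\cdot\mathbf 1 + \eta u)/(1+\eta)$, and since $q(x)\ge 1/G-|p-q|_\infty$ pointwise while $\eta u=\eta/G\ge |p-q|_\infty$, we get $p''\ge0$; moreover $p''$ is a genuine distribution, it is $k$-uniform because it is an affine combination of the $k$-uniform $q$ and the $k$-uniform $u$, and $|p-p''|_1\le|p-q|_1+|q-p''|_1\le G|p-q|_\infty+2\eta\le 3(mH)^{2k}\e$ after plugging the $L_\infty$ bound. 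That gives a bona fide $k$-uniform distribution at distance $3(mH)^{2k}\e$ from $p$, completing the proof; the only real work is the Fourier-dimension counting in Step 2 and the observation (Step 1) that weight-level truncation preserves reality, everything else being bookkeeping.
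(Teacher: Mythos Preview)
Your overall strategy is exactly the paper's: truncate the low-weight Fourier part, observe the truncation stays real because conjugation preserves weight, then mix with uniform to restore nonnegativity while keeping the low-weight coefficients zero. Your first clipping attempt is rightly abandoned, and your final construction $p''=(q+\eta u)/(1+\eta)$ is just a reparametrization of the paper's $(1-\beta)p'+\beta u$.

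However, there is a real gap in your Step~2 counting that breaks the proof as written. You bound $d_\rho\le\sqrt{G}=H^{m/2}$, but this throws away the crucial fact that an irrep $\rho$ with $|\rho|\le k$ is (by \factref{irrep-tensor}) a tensor of $\le k$ nontrivial factors and hence has $d_\rho\le H^{k/2}$, not $H^{m/2}$. With your bound the chain
\[
|p-q|_\infty\le(\e/G)\cdot H^{m/2}(mH)^{k/2}\sqrt{G}=(mH)^{k/2}\e
\]
is \emph{not} $\le(mH)^k\e/G$ as you assert; the $G$ in the denominator has simply disappeared. Consequently $\eta:=G|p-q|_\infty$ is of order $G(mH)^{k/2}\e$, which is not controlled by $(mH)^{2k}\e$, and the final distance bound fails. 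Using the correct $d_\rho\le H^{k/2}$ gives
\[
\sum_{1\le|\rho|\le k}d_\rho^{3/2}\le(mH)^k\cdot H^{3k/4}\le(mH)^{2k},
\]
hence $|p-q|_\infty\le(mH)^{2k}\e/G$ and $\eta\le(mH)^{2k}\e$, after which your Step~3 goes through. (Two minor points: your claim ``$q(x)\ge 1/G-|p-q|_\infty$'' is unjustified and unnecessary, since $q(x)\ge p(x)-|p-q|_\infty\ge-|p-q|_\infty$ already suffices for $p''\ge0$; and the bound $|q-p''|_1\le2\eta$ needs the harmless assumption $\eta\le1$, else the theorem is vacuous.)
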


First we note the following converse to \claimref{rep-bound-good-L_2}.
\begin{claim}
\label{claim:fourier-zero-implies-k-uniform}Let $p$ be a distribution
over $H^{m}$. Suppose $\widehat{p}(\rho)=0$ whenever $|\rho|\in\{1,\ldots,k\}$.
Then $p$ is $k$-uniform.
\end{claim}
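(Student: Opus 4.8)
The plan is to fix an arbitrary set $S$ of $k$ coordinates and show that the induced distribution $p_S$ over $H^S$ is \emph{exactly} uniform; since $S$ is arbitrary, this is precisely $k$-uniformity. I would establish this by computing the Fourier coefficients of $p_S$ and showing that all non-trivial ones vanish, so that Fourier inversion leaves only the constant term.

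The key step is the standard identity that marginalizing a distribution over the coordinates outside $S$ corresponds, on the Fourier side, to keeping only those irreps of $H^m$ that are trivial on the complement $\overline{S} := [m] \setminus S$. Concretely, write elements of $H^m = H^S \times H^{\overline{S}}$ as pairs $(y,z)$, so that $p_S(y) = \sum_z p(y,z)$. Given an irrep $\psi$ of $H^S$, let $\rho := \psi \otimes 1$ be the irrep of $H^m$ obtained by tensoring $\psi$ with the trivial representation on the $\overline{S}$-coordinates; by \factref{irrep-tensor} this is an irrep of $H^m$, and as a matrix $\rho(y,z) = \psi(y)$. Unwinding $\widehat{f}(\rho) = \E_x f(x)\overline{\rho(x)}$ and performing the sum over $z$ first gives $\widehat{p}(\rho) = H^{-|\overline{S}|}\,\widehat{p_S}(\psi)$, hence $\widehat{p_S}(\psi) = H^{|\overline{S}|}\,\widehat{p}(\rho)$.

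Now $|\rho| = |\psi|$, since tensoring with trivial representations on $\overline{S}$ creates no non-trivial components. Thus if $\psi$ is non-trivial then $1 \le |\psi| \le |S| = k$, so $1 \le |\rho| \le k$, and the hypothesis gives $\widehat{p}(\rho) = 0$, whence $\widehat{p_S}(\psi) = 0$. So the only non-zero Fourier coefficient of $p_S$ is the trivial one, which equals $\widehat{p_S}(1) = H^{-|S|}\sum_y p_S(y) = H^{-|S|}$ because $p_S$ is a probability distribution. Fourier inversion then yields $p_S(y) = H^{-|S|}$ for all $y$, i.e. $p_S$ is uniform, as desired.

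The only real content is the marginal-to-Fourier identity in the second paragraph; everything else is bookkeeping. The one subtlety to track carefully is the normalization (the paper's $\widehat{f}$ is an average, not a sum), which is exactly why the factor $H^{|\overline{S}|}$ appears — but this is harmless since we only use which coefficients vanish, not their exact values.
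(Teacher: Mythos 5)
Your proof is correct and follows essentially the same Fourier-analytic route as the paper's: both rest on the observation that the marginal onto a set $S$ of $k$ coordinates sees exactly the Fourier coefficients of $p$ at irreps that are trivial off $S$, which have $|\rho|\le k$ and hence vanish by hypothesis. The only organizational difference is that you compute $\widehat{p_S}$ directly and invert on $H^{S}$, whereas the paper inverts on $H^{m}$ first and then kills the weight-$>k$ terms by Schur orthogonality over the marginalized coordinates; the two are equivalent.
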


\begin{proof}
Consider any $k$ coordinates; assume they are the first $k$ w.l.o.g.
The probability of a string $x\in H^{k}$ is
\[
\sum_{y\in H^{m-k}}p(xy).
\]
By the inversion formula and the assumption this is
\[
\sum_{y}\sum_{\rho}d_{\rho} \tr\bigl(\widehat{p}(\rho)\rho(xy)^{T}\bigr)=\sum_{y}\widehat{p}(1)+\sum_{\rho:|\rho|>k}d_{\rho} \tr\Bigl(\widehat{p}(\rho)\sum_{y}\rho(xy)\Bigr)  .
\]
We have $\widehat{p}(1)=\mathbb{E}_{x}p(x)=1/H^{m}$ and so the first
summand is $1/H^{k}$. We show that the second summand is $0$ by
showing that $\sum_{y}\rho(xy)$ is the zero matrix. To verify this,
write $\rho$ as a tensor product of $\rho_{i}$ using \factref{irrep-tensor}.
Then one entry of $\rho(xy)$ is the product of the entries of the
$\rho_{i}$. There is a non-trivial $\rho_{i}$ corresponding to a
$y$ coordinate. The sum over that coordinate of $y$ yields $0$
by \lemref{Schur}.
\end{proof}
\begin{proof}[Proof of \thmref{agm}.]
Let 
\[
\ell(x):=\sum_{\rho:|\rho|\in[1,k]} d_{\rho} \tr\bigl(\widehat{p}(\rho)\rho(x)^{T}\bigr) 
\]
 be the ``low-degree'' part of $p$, and let 
\[
p'(x):=p(x)-\ell(x)=\sum_{\rho:|\rho|\not\in[1,k]}d_{\rho} \tr{\bigl(\widehat{p}(\rho)\rho(x)^{T}}\bigr).
\]
We first observe that $\ell$ and hence $p'$ is real. This is because
the conjugate $\overline{\rho}$ of an irrep is also an irrep, and
over $H^{t}$ the number of non-trivial coordinates is the same. Hence
the sum over $\rho$ is the same as the sum over $\overline{\rho}$.
Moreover, $\widehat{p}(\overline{\rho}) \cdot \overline{\rho} = \overline{\widehat{p}(\rho) \cdot \rho}$.
Therefore, we can write
\[
  2\ell(x)=\sum_{\rho:|\rho|\in[1,k]} \Bigl( d_{\rho} \tr\bigl( \widehat{p}(\rho)\rho(x)^{T} \bigr) + d_\rho \tr \bigl(\overline{\widehat{p}(\rho)\rho(x)^{T}}\bigr) \Bigr).
\]
The expression inside the brackets is real and so $p'(x)$ is real
as well.

Also, $\sum_{x}p'(x)=G \cdot \widehat{p'}(1)=G \cdot \widehat{p}(1)=1$ by \lemref{Schur}.

However, $p'$ may be $<0$ on some $x$. To remedy that, following
previous work, we will ``mix'' $p'$ with the uniform distribution
so that the mixture becomes a distribution. Note that the mixture
is $k$-uniform by \claimref{fourier-zero-implies-k-uniform} as the
low-degree non-trivial Fourier coefficients of both $p'$ and uniform
are zero.

Concretely, let
\[
q:=(1-\beta)p'+\beta\frac{1}{H^{m}}.
\]
for $\beta$ to be determined. Note that $q$ sums to $1$ as $p'$
and $1/H^{m}$ both do.

To determine $\beta$, first note that by definition
\[
p'(x)\ge-\ell(x).
\]
Crudely, $\ell(x) \le (mH)^k \max_{\rho} d_{\rho} |\tr ( \widehat{p}(\rho)\rho(x)^{T} ) |$.
Now, each absolute value is $\le|\widehat{p}(\rho)|_{2}|\rho(x)|_{2}.$
The first term is $\e/G$ by assumption. For the second, we use the
fact that $\rho(x)$ is unitary, and so $|\rho(x)|_{2}=|I|_{2}=\sqrt{d_{\rho}}$.
So $d_{\rho}|\tr(\widehat{p}(\rho)\rho(x)^{T})| \le d_{\rho}^{3/2}\e/G\le H^{3k/4}\e/G$,
using, in the last inequality, \ref{eq:sum-d-square} over the underlying
group $H^{k}$.

Hence, $|\ell(x)|\le(mH)^{2k}\e/G$ and we can set $\beta:=(tH)^{2k}\e/G$
and obtain that $q\ge0$. As remarked earlier, $q$ sums to 1, and
so $q$ is a distribution. It remains to bound the distance between
$p$ and $q$. We have
\[
|p-q|_{1}=\bigl|p-(1-\beta)p'+\beta/G\bigr|_{1}\le|\ell(x)|_{1}+\beta|p'|_{1}+\beta \cdot |1/G|_{1}.
\]
The last two summands are $\beta$ each. The first one is $\le(mH)^{2k}\e=\beta$
by the bound on $|\ell(x)|$ above. Hence the distance is $\le3\beta=3(mH)^{2k}\e$.
\end{proof}

\bibliographystyle{alpha}
\bibliography{OmniBib}

\newcommand{\etalchar}[1]{$^{#1}$}
\def\cprime{$'$}
\begin{thebibliography}{BDPW07}

\bibitem[ACFN15]{DBLP:journals/cc/AdaCFN15}
Anil Ada, Arkadev Chattopadhyay, Omar Fawzi, and Phuong Nguyen.
\newblock The {NOF} multiparty communication complexity of composed functions.
\newblock {\em Comput. Complex.}, 24(3):645--694, 2015.

\bibitem[AGM03]{AlonGM03}
Noga Alon, Oded Goldreich, and Yishay Mansour.
\newblock Almost k-wise independence versus k-wise independence.
\newblock {\em Inf. Process. Lett.}, 88(3):107--110, 2003.

\bibitem[BC92]{Ben-OrC92}
Michael Ben{-}Or and Richard Cleve.
\newblock Computing algebraic formulas using a constant number of registers.
\newblock {\em SIAM J.~on Computing}, 21(1):54--58, 1992.

\bibitem[BC08]{BrodyC08}
Joshua Brody and Amit Chakrabarti.
\newblock Sublinear communication protocols for multi-party pointer jumping and
  a related lower bound.
\newblock In {\em 25th Symp.~on Theoretical Aspects of Computer Science
  (STACS)}, pages 145--156, 2008.

\bibitem[BCK{\etalchar{+}}14]{BuhrmanCKLS14}
Harry Buhrman, Richard Cleve, Michal Kouck{\'{y}}, Bruno Loff, and Florian
  Speelman.
\newblock Computing with a full memory: catalytic space.
\newblock In {\em ACM Symp.~on the Theory of Computing (STOC)}, pages 857--866,
  2014.

\bibitem[BDFP83]{BorodinDFP83}
Allan Borodin, Danny Dolev, Faith~E. Fich, and Wolfgang~J. Paul.
\newblock Bounds for width two branching programs.
\newblock In {\em Proceedings of the 15th Annual {ACM} Symposium on Theory of
  Computing, 25-27 April, 1983, Boston, Massachusetts, {USA}}, pages 87--93,
  1983.

\bibitem[BDPW07]{BDPW07}
Paul Beame, Matei David, Toniann Pitassi, and Philipp Woelfel.
\newblock Separating deterministic from nondeterministic nof multiparty
  communication complexity.
\newblock In {\em 34th Coll.~on Automata, Languages and Programming (ICALP)},
  pages 134--145. Springer, 2007.

\bibitem[BGKL03]{BGKL03}
L{\'a}szl{\'o} Babai, Anna G{\'a}l, Peter~G. Kimmel, and Satyanarayana~V.
  Lokam.
\newblock Communication complexity of simultaneous messages.
\newblock {\em SIAM J.~on Computing}, 33(1):137--166, 2003.

\bibitem[BNP08]{BabaiNP08}
L{\'a}szl{\'o} Babai, Nikolay Nikolov, and L{\'a}szl{\'o} Pyber.
\newblock Product growth and mixing in finite groups.
\newblock In {\em ACM-SIAM Symp.~on Discrete Algorithms (SODA)}, pages
  248--257, 2008.

\bibitem[BNS92]{BNS92}
L{\'a}szl{\'o} Babai, Noam Nisan, and M{\'a}ri{\'o} Szegedy.
\newblock Multiparty protocols, pseudorandom generators for logspace, and
  time-space trade-offs.
\newblock {\em J.~of Computer and System Sciences}, 45(2):204--232, 1992.

\bibitem[CFL83]{CFL83}
Ashok~K. Chandra, Merrick~L. Furst, and Richard~J. Lipton.
\newblock Multi-party protocols.
\newblock In {\em 15th ACM Symp.~on the Theory of Computing (STOC)}, pages
  94--99, 1983.

\bibitem[CM20]{DBLP:conf/stoc/CookM20}
James Cook and Ian Mertz.
\newblock Catalytic approaches to the tree evaluation problem.
\newblock In {\em {STOC}}, pages 752--760. {ACM}, 2020.

\bibitem[Dia88]{MR964069}
Persi Diaconis.
\newblock {\em Group representations in probability and statistics}, volume~11
  of {\em Institute of Mathematical Statistics Lecture Notes---Monograph
  Series}.
\newblock Institute of Mathematical Statistics, Hayward, CA, 1988.

\bibitem[DSV03]{DavidoffSV2003elementary}
G.~Davidoff, P.~Sarnak, and A.~Valette.
\newblock {\em Elementary Number Theory, Group Theory and {R}amanujan Graphs}.
\newblock Elementary Number Theory, Group Theory, and {R}amanujan Graphs.
  Cambridge University Press, 2003.

\bibitem[DV23]{DV-group-mix}
Harm Derksen and Emanuele Viola.
\newblock Quasirandom groups enjoy interleaved mixing.
\newblock {\em Discrete Analysis}, 2023.

\bibitem[Gow08]{Gowers08}
W.~T. Gowers.
\newblock Quasirandom groups.
\newblock {\em Combinatorics, Probability {\&} Computing}, 17(3):363--387,
  2008.

\bibitem[Gow17]{MR3584096}
W.~T. Gowers.
\newblock Generalizations of {F}ourier analysis, and how to apply them.
\newblock {\em Bull. Amer. Math. Soc. (N.S.)}, 54(1):1--44, 2017.

\bibitem[GV15]{GowersV-cc-int}
W.~T. Gowers and Emanuele Viola.
\newblock The communication complexity of interleaved group products.
\newblock In {\em ACM Symp.~on the Theory of Computing (STOC)}, 2015.

\bibitem[GV19]{GowersV-cc-int-journal}
W.~T. Gowers and Emanuele Viola.
\newblock Interleaved group products.
\newblock {\em SIAM J.~on Computing}, 48(3):554--580, 2019.
\newblock Special issue of FOCS 2016.

\bibitem[GV22]{GowersV-group-mix}
W.~T. Gowers and Emanuele Viola.
\newblock Mixing in non-quasirandom groups.
\newblock In {\em ACM Innovations in Theoretical Computer Science
  conf.~(ITCS)}, 2022.

\bibitem[IL95]{ImmermanL95}
Neil Immerman and Susan Landau.
\newblock The complexity of iterated multiplication.
\newblock {\em Inf. Comput.}, 116(1):103--116, 1995.

\bibitem[KLM23]{DBLP:journals/corr/abs-2308-12451}
Zander Kelley, Shachar Lovett, and Raghu Meka.
\newblock Explicit separations between randomized and deterministic
  number-on-forehead communication.
\newblock {\em CoRR}, abs/2308.12451, 2023.

\bibitem[KMR66]{KrohnMR66}
Kenneth Krohn, W.~D. Maurer, and John Rhodes.
\newblock Realizing complex {B}oolean functions with simple groups.
\newblock {\em Information and Control}, 9:190--195, 1966.

\bibitem[KN97]{KuN97}
Eyal Kushilevitz and Noam Nisan.
\newblock {\em Communication complexity}.
\newblock Cambridge University Press, 1997.

\bibitem[LZ77]{DBLP:journals/jacm/LiptonZ77}
Richard~J. Lipton and Yechezkel Zalcstein.
\newblock Word problems solvable in logspace.
\newblock {\em J. {ACM}}, 24(3):522--526, 1977.

\bibitem[MC87]{DBLP:journals/siamcomp/McKenzieC87}
Pierre McKenzie and Stephen~A. Cook.
\newblock The parallel complexity of abelian permutation group problems.
\newblock {\em {SIAM} J. Comput.}, 16(5):880--909, 1987.

\bibitem[Mil14]{Miles14}
Eric Miles.
\newblock Iterated group products and leakage resilience against {$NC^1$}.
\newblock In {\em ACM Innovations in Theoretical Computer Science
  conf.~(ITCS)}, 2014.

\bibitem[{Mix}89]{Barrington89}
David~A. {Mix Barrington}.
\newblock Bounded-width polynomial-size branching programs recognize exactly
  those languages in {NC}$^1$.
\newblock {\em J.~of Computer and System Sciences}, 38(1):150--164, 1989.

\bibitem[MV13]{MilesV-leak}
Eric Miles and Emanuele Viola.
\newblock Shielding circuits with groups.
\newblock In {\em ACM Symp.~on the Theory of Computing (STOC)}, 2013.

\bibitem[PRS97]{PRS97}
Pavel Pudl{\'a}k, Vojt{\v{e}}ch R{\"o}dl, and Ji{\v{r}}{\'{\i}} Sgall.
\newblock {B}oolean circuits, tensor ranks, and communication complexity.
\newblock {\em SIAM J.~on Computing}, 26(3):605--633, 1997.

\bibitem[RX13]{DBLP:journals/rsa/RubinfeldX13}
Ronitt Rubinfeld and Ning Xie.
\newblock Robust characterizations of \emph{k}-wise independence over product
  spaces and related testing results.
\newblock {\em Random Struct. Algorithms}, 43(3):265--312, 2013.

\bibitem[RY19]{RaoY2019}
Anup Rao and Amir Yehudayoff.
\newblock {\em Communication complexity}.
\newblock 2019.
\newblock https://homes.cs.washington.edu/~anuprao/pubs/book.pdf.

\bibitem[Ser77]{Serre77}
Jean~Pierre Serre.
\newblock {\em Linear Representations of Finite Groups}.
\newblock Springer, 1977.

\bibitem[Sha16]{Shalev16}
Aner Shalev.
\newblock Mixing, communication complexity and conjectures of {G}owers and
  {V}iola.
\newblock {\em Combinatorics, Probability and Computing}, pages 1--13, 6 2016.
\newblock arXiv:1601.00795.

\bibitem[Ter99]{MR1695775}
Audrey Terras.
\newblock {\em Fourier analysis on finite groups and applications}, volume~43
  of {\em London Mathematical Society Student Texts}.
\newblock Cambridge University Press, Cambridge, 1999.

\bibitem[Vio19]{viola-SIGACT19}
Emanuele Viola.
\newblock Non-abelian combinatorics and communication complexity.
\newblock {\em SIGACT News, Complexity Theory Column}, 50(3), 2019.

\bibitem[VW08]{ViW-GF2}
Emanuele Viola and Avi Wigderson.
\newblock Norms, {XOR} lemmas, and lower bounds for polynomials and protocols.
\newblock {\em Theory of Computing}, 4:137--168, 2008.

\bibitem[Wig10]{Wigderson-Barbados10}
Avi Wigderson.
\newblock Representation theory of finite groups, and applications.
\newblock Available at
  http://www.math.ias.edu/$\sim$avi/TALKS/Green{\_}Wigderson{\_}lecture.pdf,
  2010.

\end{thebibliography}

\end{document}